\documentclass[english]{article}
\usepackage[latin1]{inputenc}
\usepackage{geometry}
\usepackage{mathrsfs}
\usepackage{amsmath}
\usepackage{amssymb}
\usepackage{amsthm}

\usepackage{amsfonts}
\usepackage{epsfig}
\usepackage{graphicx}
\usepackage{algorithm}
\usepackage[noend]{algpseudocode}
\usepackage{babel}
\usepackage{fullpage}
\usepackage[colorlinks,citecolor=blue,urlcolor=blue]{hyperref}


\renewcommand{\phi}{\varphi}
\renewcommand{\epsilon}{\varepsilon}
\newcommand{\FF}{\mathcal{F}}

\newcommand{\RR}{\mathbb{R}}

\newcommand{\X}{\ensuremath{\mathcal{X}}}

\newcommand{\assetvalue}{\mathbf R}
\newcommand{\logassetvalue}{\mathbf S}
\newcommand{\logassetvaluesmall}{\mathbf s}
\newcommand{\nparticles}{M}
\newcommand{\particleindex}{m}
\newcommand{\ntimepoints}{p}


\newcommand{\EE}{\ensuremath{\operatorname{E}}}
\newcommand{\E}[1]{\EE\left[ #1  \right]}
\newcommand{\CE}[2]{\EE\left[\left. #1 \, \right| #2 \right]}
\newcommand{\CCE}[3]{\EE_{#1}\left[\left. #2 \, \right| #3 \right]}

\newcommand{\BK}[1]{ {\left( #1 \right)} }          
\newcommand{\sqBK}[1]{ {\left[ #1 \right]} }       



\newcommand{\Normal}[1]{\ensuremath{\operatorname{N}}\BK{#1}}
\newcommand{\dNormal}[1]{\ensuremath{\phi}\BK{#1}}



\newtheorem{rem}{Remark}[section]
\newtheorem{prop}{Proposition}[section]

\newcounter{hypA}

\date{}

\begin{document}

\begin{center}

{\Large \textbf{Some Contributions to Sequential Monte Carlo Methods for Option Pricing}}

\vspace{0.5cm}
\author{blabla}
BY DEBORSHEE SEN \footnote{Corresponding author.}, AJAY JASRA \& YAN ZHOU

{\footnotesize Department of Statistics \& Applied Probability,
National University of Singapore, Singapore, 117546, SG.}
{\footnotesize Email:\,}\texttt{\emph{\footnotesize deborshee.sen@u.nus.edu; staja@nus.edu.sg; stazhou@nus.edu.sg}}\\
\end{center}

\begin{abstract}
Pricing options is an important problem in financial engineering.
In many scenarios of practical interest, financial option prices associated to an underlying asset reduces to computing an expectation
w.r.t.~a diffusion process. In general, these expectations cannot be calculated analytically, and one way to approximate these quantities is via the Monte Carlo method; Monte Carlo methods have been used to price options since at least the 1970's. It has been seen in \cite{valuation, SMC_option_jasra} that  Sequential Monte Carlo (SMC) methods are a natural tool to apply in this context and can vastly improve over standard Monte Carlo. In this article, in a similar spirit to \cite{valuation, SMC_option_jasra} we show that one can achieve significant gains by using SMC methods by constructing a sequence of artificial target densities over time. In particular, we approximate the optimal importance sampling distribution in the SMC algorithm by using a sequence of weighting functions. This is demonstrated on two examples, barrier options and target accrual redemption notes (TARN's).  We also provide a proof of unbiasedness of our SMC estimate.\\
\textbf{Key words:}  Diffusions; Sequential Monte Carlo; Option Pricing \\
\textbf{AMS Subject Classification:} Primary 91G60; Secondary 65C05.  
\end{abstract}

\subsubsection*{Acknowledgements}
AJ was supported by a Singapore Ministry of Education Academic Research Fund Tier 1 grant (R-155-000-156-112) and is affiliated with the RMI and CQF at NUS. YZ was supported by a Singapore Ministry of Education Academic Research Fund Tier 2 grant (R-155-000-143-112).

\section{Introduction}
A basic (or \textit{vanilla}) option is a financial product which provides the holder of the option with the right to buy or sell a specified quantity of an underlying asset at a fixed price on or before the expiration date of the option. There are many more complex options (called \textit{exotic} options) in use today; these are often of more practical interest and are harder to deal with.  
In many scenarios of practical interest and as we shall assume in this article, the value of the underlying asset can be described by a diffusion process; in a complete market, the value of the option can be expressed as the expectation under the risk neutral probability of a functional of the paths of the underlying diffusion process.
In general these expectations cannot be calculated analytically. The Monte Carlo (MC) method is a standard approach used to approximate these quantities and it has been extensively used in option pricing since \cite{boyle}. Subsequently, a wide variety of Monte Carlo approaches have been applied (\cite{glasserman} provides a thorough introduction). \\

The importance of Monte Carlo for option pricing against other numerical approaches is its ability to deal with high-dimensional integrals. This is either in the time parameter of the option (path-dependent options) or in the dimension of the underlying (basket of options), and more generally in both. However, it has been noted in the option pricing literature that standard Monte Carlo estimates can suffer from high variability. 
It has been seen in  \cite{valuation,SMC_option_jasra, jasra_doucet} that, in many situations of practical interest, Sequential Monte Carlo (SMC) approaches can vastly improve over more standard Monte Carlo techniques. \\

Sequential Monte Carlo methods are a general class of methods to sample from a sequence of distributions of increasing dimensions which  have extensively been used in engineering, statistics, physics and other domains. \cite{tutorial_15_years_later} provides an introduction and shows how essentially all methods for particle filtering can be interpreted as some special instances of a generic SMC algorithm. SMC methods make use of  a sequence of proposal densities to sequentially approximate the targets via a collection of $\nparticles$ samples, termed particles. In most scenarios it is not possible to use the distribution of interest as a proposal. Therefore, one must correct for the discrepancy between proposal and target  via importance weights. In the majority of cases of practical interest, the variance of these importance weights increases with algorithmic time. This can, to some extent, be dealt with a  resampling procedure consisted of sampling with replacement from the current weighted samples and resetting them to $1/\nparticles$ (adaptive resampling). The variability of the weights is often measured by the effective sample size (ESS).  Several convergence results, as $\nparticles$ grows, have been proved \cite{hock_peng_martingale,del_moral_book,delmoral1,douc}. SMC methods have also recently been proven to be stable in certain high-dimensional contexts \cite{beskos}. \\

The main contributions of this paper are as follows.
We develop the formal framework of \emph{weighting functions}; this technique has already been used, implicitly or explicitly, in \cite{del2005genealogical,valuation,SMC_highly_informative,SMC_option_jasra}. Exploiting this framework, we develop tailored methods for pricing of barrier options in high dimensional settings. It is also applied to the pricing of Target Accrual Redemption Note (TARN) which are another widely traded kind of path dependent options that are notoriously difficult to accurately value. On the theoretical side, we provide with a proof of the unbiasedness of the SMC estimates when an adaptive resampling scheme is used. \\

This paper is structured as follows. In Section \ref{sec:options} we provide background details on option pricing. In Section \ref{sec:smc} we give a basic summary of SMC methods. In Section \ref{sec:weighting} we give the weighting functions framework and its application in the context of our option pricing problems. In Section
\ref{sec:numerics} our methods are illustrated numerically. 
The appendix gives the proof of unbiasedness of our SMC estimate in the adaptive resampling case. \\

In the remainder of this article, we use the notation $\RR^{d}$ to denote the $d$-dimensional Euclidean space and $\RR_{+}^{d} \equiv (0,\infty)^{\otimes d}$. A normal distribution with mean $\mu$ and variance $\sigma^{2}$ is denoted by $\Normal{\mu, \sigma^2}$ and its density at $x$ is denoted by by $\dNormal{x; \mu, \sigma^2}$. $I_{d}$ denotes the $d$-dimensional identity matrix. $\EE$ denotes expectation.

\section{Option pricing}
\label{sec:options}

Options come in two basic kinds - call and put. Call options give the right to buy and put options give the right to sell. In this context, there are two main kinds of options - American and European. American options can be exercised at any time prior to expiration whereas European options can be exercised only at expiration. We focus on European options in this paper. European call/put options are known as vanilla options since they are relatively simple in structure. An exotic option is an option which has features making it more complex than commonly traded vanilla options. Path-dependent options are an example, in which case the payoff depends on the value of the underlying at some (or all) time points prior to the expiration date. We consider two kinds of path-dependent options in this paper, namely barrier options and TARN's, which we shall describe shortly.

\subsection{Model}

Consider a collection of $d$ underlying assets; this is also known as a basket. We denote by $\assetvalue_{t} \in \RR^{d}$ the value of the assets in the basket. $\assetvalue_{t}$ is typically modelled by a diffusion process. One such process is a Black-Scholes model with a drift and a volatility
\begin{equation} \label{eq:pde}
d \assetvalue_{t} = \boldsymbol \mu(t,\assetvalue_{t})  dt \assetvalue_{t} + \boldsymbol\sigma(t,\assetvalue_{t}) \assetvalue_{t} d\mathbf{W}_{t},
\end{equation}
where $ \boldsymbol\mu: \RR_{+} \times \RR^{d} \rightarrow \RR^{d} $ is the drift function, $ \boldsymbol\sigma: \RR_{+} \times \RR^{d} \rightarrow \RR^{d} $ is the volatility and $\mathbf{W}_{t}$ denotes a Brownian motion in $\RR^{d}$ with mean $\mathbf{0}$ and covariance matrix $\Sigma$. It is reasonable to assume that $\mathbf{W}_{t}$ is normalized, that is, $\Sigma_{i,i} = 1$ for all $i$; this assumption is valid because the scale factor can be included in the volatility term. There is an interest rate $r$, which can depend on time as well. \\

In general, it is hard to analytically work with \eqref{eq:pde} except in simple scenarios. This has lead to several discretization methods being available in literature (\cite{glasserman}) with varying levels of accuracy and complexity. One of the most widely used discretization methods is the Euler-Maruyama discretization and we work with it in this paper; however other discretization schemes could also be used which could lead to a lower bias. Consider discretized time points $0 = t_{0} < t_{1} < \cdots < t_{N} = T$. By letting $\logassetvalue$ denote the logarithm $\assetvalue$ and writing $\logassetvalue_{n}$ in place of $\logassetvalue_{t_{n}}$, the Euler-Maruyama discretization of \eqref{eq:pde} is
\begin{equation} \label{eq:euler_maruyama}
\logassetvalue_{n+1} = \logassetvalue_{n} + \left ( \boldsymbol\mu(n,\logassetvalue_{n}) - \frac{1}{2} \boldsymbol\sigma^{2}(n,\logassetvalue_{n}) \right ) \delta t_{n} + \boldsymbol\sigma(n,\logassetvalue_{n}) \sqrt{\delta t_{n}} \mathbf{Z}_{n}
\end{equation} 
for $n = 1, \ldots, N$, where $\mathbf{Z}_{n} \sim N_{d}(\mathbf{0},\Sigma)$, $\mathbf{0}$ denotes the origin in $\RR^{d}$, and $\delta t_{n} = t_{n+1} - t_{n}$\footnote{We have a slight abuse of notation in the above, wherein we have used $\boldsymbol\mu(n,\logassetvalue_{n})$ and $\boldsymbol\sigma(n,\logassetvalue_{n})$ to denote $\boldsymbol\mu(t_{n},\assetvalue_{t_{n}})$ and $\boldsymbol\sigma(t_{n},\assetvalue_{t_{n}})$ respectively.}. \\

We do not focus on the level of discretization here. Given a particular discretization, we apply our methods to it. The methods developed here could however be used in a multilevel setup (as in \cite{giles}); we do not explore this further in this paper. \\

We assume the drift $\boldsymbol\mu = \mathbf{0}$ in order to keep things simple. \footnote{If $\boldsymbol\mu$ is a constant other than $\mathbf{0}$, then it is trivial to extend the methods we propose. If it is a function of the asset value, we could do things similar to what we do in the local volatility model considered later.} We work with two cases, one where the volatility $\boldsymbol\sigma$ is a constant, and another where it depends on the price of the underlying. We now describe two kinds of path-dependent options and we shall later demonstrate our methods on these.

\subsection{Barrier options}

A barrier option is an exotic derivative, typically an option, on the underlying asset(s) whose value(s) on reaching pre-set barrier levels either springs the option into existence or extinguishes an already existing option. Barrier options exist for baskets as well and the barrier conditions may in general be defined as a function of the underlying assets. For example, the function of the underlying assets could be the mean or could be the maximum of their values. There are two kinds of barrier options:
\begin{itemize}
\item when the option springs into existence if a function of the underlying asset values breaches prespecified barriers, it is referred to as being `knocked-in', and
\item when the option is extinguished if a function of the underlying asset values breaches prespecified barriers, it is referred to as being `knocked-out'. 
\end{itemize}  

We consider knocked-out options. These are options that are `alive' as long as the values of the underlying satisfy barrier conditions at some (or all) time points prior to the expiration. If the barrier condition is breached, the option `dies' leading to a zero payoff. If the option is still `alive' at the expiration date, then it gives a payoff that is akin to either a call or a put option (depending on the option type).  \\

We consider a basket of $d$ underlying assets and a barrier option based on these. Barrier options are hard to price using standard MC in the sense that most (if not all) of the particles lead to a zero payoff and this contributes to a high variance of the final estimate. Because of the sequential nature of the evolution of the asset values over time, this is a natural example of a setting where SMC methods can be applied; indeed \cite{valuation} talks about how SMC methods can be used in this context. We extend their method and show that one can obtain significant gains by choosing the $h_{n}$'s in Section 5 of their paper even by heuristic methods.

\subsection{Target Accrual Redemption Note}

A Target Accumulation Redemption Note (TARN) provides a capped sum of payments over a period with the possibility of early termination (knockout) determined by target levels imposed on the accumulated amounts. A certain amount of payment is made on a series of cash flow dates (referred to as fixing dates) until the target level is breached. The payoff function of a TARN is path dependent in that the payment on a fixing date depends on the spot value of the asset as well as on the accumulated payment amount up to the fixing date. Typically, commercial software solutions for pricing TARN's are based on the MC method. \\

There are different versions of TARN products used in FX trading. For simplicity, we consider here a specific form of TARN's. Consider a sequence of fixing dates $0 < T_{1} < T_{2} < \cdots < T_{\ntimepoints} = T$ and a function $f: \RR^{d} \rightarrow \RR$. The function $f$ is decomposed into its positive and negative parts as $f = f^{+} - f^{-}$. Gain and loss processes are defined as follows:
$$ G_{k} = \sum_{i=1}^{k} f^{+} (\assetvalue_{T_{i}}) \hspace{0.1in} \text{and} \hspace{0.1in} L_{k} = \sum_{i=1}^{k} f^{-} (\assetvalue_{T_{i}}), $$
these are the amounts of positive and negative cashflows respectively. There are two cashflow cutoffs $\Gamma_{G}$ and $ \Gamma_{L}$. Stopping times $\tau^{(L)}$ and $\tau^{(G)}$ are defined as 
$$ \tau^{(L)} := \inf \{k: L_{k} \geq \Gamma_{L} \} \hspace{0.1in} \text{and} \hspace{0.1in} \tau^{(G)} := \inf \{k: G_{k} \geq \Gamma_{G} \}, $$ 
these are the first times when the positive and negative cash flows cross their cut-offs respectively. The overall stopping time $\tau$ is defined as 
$$ \tau := \inf \{ \tau^{(L)}, \tau^{(G)}, \ntimepoints \}, $$
which is the first time either the positive or the negative cash flows cross a cut-off. The price of the TARN is the expected value of the overall cash flow, $\mathbb{E} [ \sum_{i=1}^{\tau} f(\assetvalue_{T_{i}}) ]$. Here we have assumed that the interest rate is $0$, if it wasn't then the expectation would be a weighted sum with the weights corresponding to discounting factors. The reason we assume the interest rate is $0$ will be explained in Section 4.3. The main difficulty in applying standard MC methods in this scenario arises from the fact that the function $f$ may be discontinuous. SMC methods can be used instead, which we shall show later.

\section{Sequential Monte Carlo methods}\label{sec:smc}

SMC methods are a general class of MC methods that sample sequentially from a sequence of target probability densities $\{ \pi_{n}(x_{1:n} ) \}_{n \geq 1} $ of increasing dimension, where each distribution $\pi_{n} (x_{1:n}) $ is defined on the product space $\X^{n}$. Writing 

\begin{equation} \nonumber
\pi_{n}(x_{1:n}) = \frac{ \gamma_{n}(x_{1:n})}{Z_{n}},
\end{equation}
it is required only that $\gamma_{n}:\X^{n} \rightarrow \RR $ be known pointwise. In particular, the normalizing constants $Z_{n} = \int_{\X^{n}} \gamma_{n}(x_{1:n}) dx_{1:n}$ may be unknown and an estimate is obtained by the SMC method. SMC provides an approximation of $\pi_{1}(x_{1})$ and an estimate of $Z_{1}$ at time 1, then an estimate of $\pi_{2}(x_{1:2})$ and an estimate of $Z_{2}$ at time 2, and so on. These methods work by propagating a collection of $\nparticles$ particles using a sequence of importance sampling and resampling steps. In what follows, superscript $^{(\particleindex)}$ shall denote the $\particleindex$-th particle. When we write an operation with superscript $^{(\particleindex)}$, we mean that it happens for all $\nparticles$ particles. A proposal density $q_{n}(x_{1:n})$ is selected and particles are proposed from this. The proposal density has the following structure:
\begin{equation} \nonumber
q_{n}(x_{1:n}) = q_{n-1}(x_{1:n-1}) q_{n}(x_{n}|x_{1:n-1}) = q_{1}(x_{1}) \prod_{k=1}^{n} q_{k}(x_{k}|x_{1:k-1}). 
\end{equation}
After proposing particles, associated \textit{unnormalized} weights are computed recursively using the decomposition
\begin{equation} \nonumber
w_{n}(x_{1:n}) = \frac{ \gamma_{n}(x_{1:n}) }{ q_{n}(x_{1:n} ) } = \frac{ \gamma_{n-1}(x_{1:n-1}) }{ q_{n-1}(x_{1:n-1}) } \frac{ \gamma_{n}(x_{1:n}) }{ \gamma_{n-1}(x_{1:n-1}) q_{n}(x_{n}|x_{1:n-1} ) }.
\end{equation}
These can be written in the form 
\begin{equation} \nonumber
w_{n} (x_{1:n}) = w_{n-1}(x_{1:n-1}) \alpha_{n} (x_{1:n}) = w_{1}(x_{1}) \prod_{k=1}^{n} \alpha_{k}(x_{1:k}),
\end{equation}
where the \textit{incremental weight} function $\alpha_{k}(x_{1:k})$ is given by 
$$ \alpha_{k}(x_{1:k}) = \frac{ \gamma_{k}(x_{1:k}) }{ \gamma_{k-1}(x_{1:k-1}) q_{k}(x_{k}|x_{1:k-1} ) }. $$
These are computed for each particle.
The weights $w_{n}(x_{1:n})$'s are unnormalized because they do not add up to 1. They are normalized by dividing by their sum and the normalized weights are denoted by $W_{n}$'s. \\

Once we obtain a collection of $\nparticles$ weighted particles, they are \textit{resampled} according to their weights $\left \{ W_{n}^{(\particleindex)} \right \}_{\particleindex=1}^{\nparticles}$. This is a crucial step which ensures that the system doesn't collapse to very few particles with very high weights. Resampling indices $\left \{ I_{n}^{(\particleindex)} \right \}_{\particleindex=1}^{\nparticles}$ are chosen such that $\mathbb{P} \left ( I_{n}^{(\particleindex)} = j \right ) = W_{n}^{(j)}$. The system is then updated by setting $ \overline{X}_{1:n}^{(\particleindex)} \leftarrow X_{1:n}^{\left ( I_{n}^{(\particleindex)} \right )}$ and $\overline{W}_{n}^{(\particleindex)} \leftarrow 1/\nparticles$ for $1 \leq \particleindex \leq \nparticles$. Resampling is expensive and in practice is done only if the variance of the weights is high. One such method is to use the Effective Sample Size (ESS), defined at time $n$ as 
$$ \text{ESS} = \frac{1}{ \sum_{\particleindex=1}^{\nparticles} \left ( W_{n}^{(\particleindex)} \right )^{2} } $$
and to perform resampling at time $n$ if the ESS falls below a certain threshold, usually taken to be $\nparticles/2$ or $\nparticles/3$. Algorithm \ref{algo:SMC} describes a general SMC algorithm.

\begin{algorithm}
\caption{A generic SMC algorithm}\label{algo:SMC}
\begin{algorithmic}[1]

\State Set initial weights $\overline{W}_{0}^{(\particleindex)} = 1/ \nparticles$ and initial estimate of normalizing constant $\widehat{C}_{0} = 1$.

\For {$n=1$}

\State Sample $X_{1}^{(\particleindex)} \sim q_{1}(x_{1})$,

\State compute unnormalized weights $w_{1} \left ( X_{1}^{(\particleindex)} \right )$,

\State update estimate of normalizing constant $ \widehat{C}_{1} = \widehat{C}_{0} \times \sum_{\particleindex=1}^{\nparticles} w_{1}\left ( X_{1}^{(\particleindex)} \right )$,

\State compute normalized weights $W_{1}^{(\particleindex)} \propto w_{1} \left ( X_{1}^{(\particleindex)} \right )$,

\If {resampling criterion is satisfied} 

\State resample $ \left \{ W_{1}^{(\particleindex)}, X_{1}^{(\particleindex)} \right \}_{\particleindex=1}^{\nparticles}$ to obtain $\nparticles$ equally-weighted particles $\left \{ 1/\nparticles, \overline{X}_{1}^{(\particleindex)} \right \}_{\particleindex=1}^{\nparticles}$ and set $ \left \{ \overline{W}_{1}^{(\particleindex)}, \overline{X}_{1}^{(\particleindex)} \right \}_{\particleindex=1}^{\nparticles} \leftarrow \left \{ 1/\nparticles, \overline{X}_{1}^{(\particleindex)} \right \}_{\particleindex=1}^{\nparticles}$,

\Else

\State set $ \left \{ \overline{W}_{1}^{(\particleindex)}, \overline{X}_{1}^{(\particleindex)} \right \}_{\particleindex=1}^{\nparticles} \leftarrow \left \{ W_{1}^{(\particleindex)}, X_{1}^{(\particleindex)} \right \}_{\particleindex=1}^{\nparticles}$.

\EndIf

\EndFor

\For {$n \geq 2$}

\State Sample $X_{n}^{(\particleindex)} \sim q_{n}\left ( x_{n}|\overline{X}_{1:n-1}^{(\particleindex)} \right )$ and set $ X_{1:n}^{(\particleindex)} \leftarrow \left ( \overline{X}_{1:n-1}^{(\particleindex)}, X_{n}^{(\particleindex)} \right )$,

\State compute incremental weights $ \alpha_{n} \left ( X_{1:n}^{(\particleindex)} \right )$ and unnormalized weights $ w_{n} \left ( X_{1:n}^{(\particleindex)} \right ) = \overline{W}_{n-1}^{(\particleindex - 1)} \alpha_{n} \left ( X_{1:n}^{(\particleindex)} \right )$,

\State update estimate of normalizing constant $ \widehat{C}_{n} = \widehat{C}_{n-1} \times \sum_{\particleindex=1}^{\nparticles} w_{n}\left ( X_{1:n}^{(\particleindex)} \right )$,

\State compute normalized weights $W_{n}^{(\particleindex)} \propto w_{n} \left ( X_{1:n}^{(\particleindex)} \right )$,

\If {resampling criterion is satisfied} 

\State resample $ \left \{ W_{n}^{(\particleindex)}, X_{1:n}^{(\particleindex)} \right \}_{\particleindex=1}^{\nparticles}$ to obtain $\nparticles$ equally-weighted particles $\left \{ 1/\nparticles, \overline{X}_{1:n}^{(\particleindex)} \right \}_{\particleindex=1}^{\nparticles}$ and set $ \left \{ \overline{W}_{n}^{(\particleindex)}, \overline{X}_{1:n}^{(\particleindex)} \right \}_{\particleindex=1}^{\nparticles} \leftarrow \left \{ 1/\nparticles, \overline{X}_{1:n}^{(\particleindex)} \right \}_{\particleindex=1}^{\nparticles}$,

\Else 

set $ \left \{ \overline{W}_{n}^{(\particleindex)}, \overline{X}_{1:n}^{(\particleindex)} \right \}_{\particleindex=1}^{\nparticles} \leftarrow \left \{ W_{n}^{(\particleindex)}, X_{1:n}^{(\particleindex)} \right \}_{\particleindex=1}^{\nparticles}$.

\EndIf

\EndFor

\end{algorithmic}
\end{algorithm}

We have two approximations of $\pi_{n}(x_{1:n})$:
\begin{eqnarray}
\widehat{\pi}_{n}(x_{1:n}) & = & \sum_{\particleindex=1}^{\nparticles} W_{n}^{(\particleindex)} \delta_{X_{1:n}^{(\particleindex)}}(x_{1:n}) \textrm{ and} \nonumber \\
\bar{\pi}_{n}(x_{1:n}) & = & \sum_{\particleindex=1}^{\nparticles} \overline{W}_{n}^{(\particleindex)} \delta_{\overline{X}_{1:n}^{(\particleindex)}} (x_{1:n}), \nonumber 
\end{eqnarray} 
which are equal if no resampling is used at time $n$. Here $\delta$ denotes the Dirac delta function. We can also estimate $Z_{n}/Z_{n-1}$ through 
$$ \widehat{ \frac{Z_{n}}{Z_{n-1}} } = \sum_{\particleindex=1}^{\nparticles} \overline{W}_{n-1}^{(\particleindex)} \alpha_{n} \left ( X_{1:n}^{(\particleindex)} \right ). $$ 

\section{Using weighting functions}\label{sec:weighting}

\subsection{In SMC}

The goal of using weighting functions is to create a sequence of intermediate target densities that try to approximate the optimal importance sampling density to guide particles towards regions of interest. Consider a discrete time stochastic process $ X_{1:N} $ on a general space $\X^{N}$. Suppose that we want to estimate the expectation of a function of this process $H(X_{1:N})$. The basic Monte Carlo method simulates $\nparticles$ independent realizations of the process $ \left \{ X_{1:N}^{(\particleindex)} \right \}_{\particleindex=1}^{\nparticles}$, where $X_{1:N}^{(\particleindex)}$ denotes the $\particleindex$-th realization. $\E{ H(X_{1:N} ) }$ is estimated by 
$$ \frac{1}{\nparticles} \sum_{\particleindex=1}^{\nparticles} H \left ( X_{1:N}^{(\particleindex)} \right ). $$ 
If $H$ is of the form where $H(X_{1:N})$ is zero over most of the space $\X^{N}$ and non-zero only on a small subset, most of the $H \left ( X_{1:N}^{(\particleindex)} \right )$'s would be zero. This subsequently leads to a high variance of the resulting estimate. \\

More specifically, since $\X^{N}$ is the sample space, we write 
$$ \mathcal{S} := \X^{N} = \mathcal{S}_{1} \cup \mathcal{S}_{2}, $$
where 
$$ \mathcal{S}_{1} = \left \{ x_{1:N} \in \X^{N} : H(x_{1:N}) = 0 \right \} $$ 
and 
$$ \mathcal{S}_{2} = \left \{ x_{1:N} \in \X^{N} : H(x_{1:N}) \neq 0 \right \}. $$ 
If $\mathbb{P} ( X_{1:N} \in \mathcal{S}_{1} )$ is much larger than $\mathbb{P} ( X_{1:N} \in \mathcal{S}_{2} ) $, simulating $M$ independent realizations will lead to most of them lying in $\mathcal{S}_{1}$. Our goal is to use SMC to simulate more particles from $\mathcal{S}_{2}$. In order to do that, we consider a sequence of positive potential functions  $G_{n}: \mathcal{X}^{n} \rightarrow \RR$, $n = 1, 2, \ldots, N$, such that $\prod_{n=1}^{N} G_{n}(X_{1:n}) = 1$ and write
$$ H(X_{1:N}) = \left [  \prod_{n=1}^{N} G_{n}(X_{1:n}) \right ] H(X_{1:N}). $$ \\
The goal is to choose the $G_{n}$'s such that they guide the particles towards being in $\mathcal{S}_{2}$ through the weighting and resampling steps of SMC. \\

When this is done on a path space, the resulting algorithm is sometimes known as tempering. \cite{SMC_samplers} considers this and shows how one can construct an artificial sequence of intermediate target densities on the path space which guide particles towards regions of interest. Doing it on the path space however makes it computationally expensive and we do not work on the path space.

\subsection{Barrier options} \label{sec:barrier_SMC}

We consider a discretely monitored barrier option monitored on a series of monitoring dates $T_{1} < \cdots < T_{\ntimepoints} = T$ and a sequence of lower and upper barriers $\mathbf{L}_{T_{1}}, \ldots, \mathbf{L}_{T_{\ntimepoints}} \in \RR^{d}$ and $\mathbf{U}_{T_{1}}, \ldots, \mathbf{U}_{T_{\ntimepoints}} \in \RR^{d}$ respectively. We suppose that the barrier conditions are that all the underlying asset values lie inside their respective barriers at the monitoring times. This is a simplistic assumption and makes it easier for us to demonstrate our methods; more complicated barrier conditions could also be used. \\

For ease of notation, we remove the $T_{i}$ from the barriers and replace it simply by $i$. Let $X_{n} = (\logassetvalue_{n-1},\logassetvalue_{n}) \in \RR^{2d}$ for $n \geq 1$ and let $H$ denote the payoff function at time $T$. The sequence of random variables $ X_{1:N} $ then forms a Markov Chain. The price of the barrier option is

$$ Q_{D} = \E{ H(\logassetvalue_{T}) \prod_{i=1}^{\ntimepoints} \mathbf{1} \left \{ \logassetvalue_{T_{i}} \in (\mathbf{L}_{i}, \mathbf{U}_{i}) \right \} } \footnote{We have assumed here that the interest rate is 0. If the interest rate was $r$, then there would be a factor of $e^{\int_{0}^{T} r(t) d t }$ multiplied with $Q_{D}$. This is a constant and affects the variance of the estimate only upto a (known) scale factor.}, $$
where 
$$ \mathbf{1} \left \{ \logassetvalue_{T_{i}} \in (\mathbf{L}_{i}, \mathbf{U}_{i}) \right \} = \prod_{j=1}^{d} \mathbf{1} \left \{ S_{T_{i},j} \in (L_{i,j}, U_{i,j}) \right \}. $$ 
$S_{T_{i},j}, L_{i,j}$ and $U_{i,j}$ denote the $j$-th components of $\logassetvalue_{T_{i}}, \mathbf{L}_{i}$ and $\mathbf{U}_{i}$ respectively. In this case, 
\begin{eqnarray}
\mathcal{S} & = & \RR^{dN}, \nonumber \\
\mathcal{S}_{2} & = & \left \{ \mathbf{s} \in \RR^{dN}: s_{T_{i},j} \in (L_{i,j}, U_{i,j}), \textrm{ } i = 1, 2, \ldots, \ntimepoints, \textrm{ } j = 1, 2, \ldots, d \right \}, \nonumber \\
\mathcal{S}_{1} & = & \mathcal{S} \setminus \mathcal{S}_{2}. \nonumber 
\end{eqnarray}

The authors in $\cite{valuation}$ introduce an SMC algorithm to estimate the price. The proposal density $q$ is chosen to be density with respect to the underlying discretization. For each time interval $(T_{i-1},T_{i})$ they simulate forward till $T_{i}$ and then resample particles that breach the barrier condition at time $T_{i}$ from among the particles that are still inside the barrier. Their algorithm is Algorithm \ref{algo:SMC_barrier}. It is commented that they do not use an adaptive version of resampling and instead always resample.

\begin{algorithm}
\caption{SMC For barrier options}\label{algo:SMC_barrier}
\begin{algorithmic}[1]

\State Set initial weights $\overline{W}_{0}^{(\particleindex)} = 1/ \nparticles$ and initial estimate of normalizing constant $\widehat{C}_{0} = 1$.

\For {$n=1$}

\State Sample $X_{1}^{(\particleindex)} \sim q_{1}(x_{1})$.

\EndFor

\For {$n \geq 2$}

\State Sample $X_{n}^{(\particleindex)} \sim q_{n}\left ( x_{n}|\overline{X}_{n-1}^{(\particleindex)} \right )$ and set $ X_{1:n}^{(\particleindex)} \leftarrow \left ( \overline{X}_{1:n-1}^{(\particleindex)}, X_{n}^{(\particleindex)} \right )$.

\If {$n \in \{T_{1}, \ldots, T_{\ntimepoints} \}$, say $n = T_{k}$}

\State Compute unnormalized weights $w_{k}^{(\particleindex)} = W_{k-1}^{(\particleindex)} \times \mathbf{1} \left \{ \logassetvalue_{T_{k}}^{(\particleindex)} \in \left ( \mathbf{L}_{k},\mathbf{U}_{k} \right ) \right \}$,

\State update estimate of normalizing constant $ \widehat{C}_{k} = \widehat{C}_{k-1} \times \sum_{\particleindex=1}^{\nparticles} w_{k}^{(\particleindex)}$,

\State compute normalized weights $ W_{k}^{(\particleindex)} \propto w_{k}^{(\particleindex)}$,

\State resample $ \left \{ W_{n}^{(\particleindex)}, X_{1:n}^{(\particleindex)} \right \}_{\particleindex=1}^{\nparticles}$ to obtain $\nparticles$ equally-weighted particles $\left \{ 1/\nparticles, \overline{X}_{1:n}^{(\particleindex)} \right \}_{\particleindex=1}^{\nparticles}$ and set $ \left \{ \overline{W}_{n}^{(\particleindex)}, \overline{X}_{1:n}^{(\particleindex)} \right \}_{\particleindex=1}^{\nparticles} \leftarrow \left \{ 1/\nparticles, \overline{X}_{1:n}^{(\particleindex)} \right \}_{\particleindex=1}^{\nparticles}$.

\EndIf

\EndFor

\end{algorithmic}
\end{algorithm}

The estimated price is
$$ \widehat{Q}_{D} = \widehat{C}_{\ntimepoints} \times \sum_{\particleindex=1}^{\nparticles} \overline{W}_{\ntimepoints}^{(\particleindex)} H \left ( \overline{\logassetvalue}_{N}^{(\particleindex)} \right ).  $$

Moreover, since essentially only the normalizing constant is being computed, there is no issue of path degeneracy \footnote{Path degeneracy is when repeated resampling steps lead to many multiple copies of the same particle $X_{1:N}$. This causes estimates based on the entire paths being unreliable.}. Resampling paths outside the barriers at the monitoring times from paths which are inside the barriers improves the efficiency of the estimator with respect to the standard MC estimator.
It is remarked that for constant volatility, in a Black-Scholes context, one can sample the price to ensure, in one step, that the process always survives; see \cite{glass_staum}. \\

If, however, very few particles satisfy the barrier condition at time $T_{i}$, then this estimate will also have a high variance. This can happen for example if: the dimension $d$ is high; the barrier condition is a narrow one, i.e, $\mathbf{L}_{i}$ and $\mathbf{U}_{i}$ are close to each other; the volatility $\boldsymbol\sigma$ is high; the time intervals $(T_{i-1},T_{i})$ are large. Our goal is to introduce a sequence of positive weighting functions so that they resolve this issue. In order to do so, we write 

\begin{eqnarray}
Q_{D} & = & \EE \bigg[ h_{0}(\logassetvalue_{0}) \frac{ h_{1}(\logassetvalue_{1}) } { h_{0}(\logassetvalue_{0}) } \times \cdots \frac{ h_{T_{1}}(\logassetvalue_{T_{1}}) }{ h_{T_{1}-1}(\logassetvalue_{T_{1}-1}) } \times h_{T_{1}+1} (\logassetvalue_{T_{1}+1}) \times  \frac{ h_{T_{1}+2}(\logassetvalue_{T_{1}+2}) } { h_{T_{1}+1}(\logassetvalue_{T_{1}+1}) } \times \cdots \nonumber \\ 
& & \hspace{0.2in} \times \frac{ h_{T_{\ntimepoints-1}}(\logassetvalue_{T_{\ntimepoints-1}}) } { h_{T_{\ntimepoints-1}-1}(\logassetvalue_{T_{\ntimepoints-1}-1}) } \times h_{T_{\ntimepoints-1}+1} (\logassetvalue_{T_{\ntimepoints-1}+1}) \times  \frac{ h_{T_{\ntimepoints-1}+2}(\logassetvalue_{T_{\ntimepoints-1}+2}) } { h_{T_{\ntimepoints-1}+1}(\logassetvalue_{T_{\ntimepoints-1}+1}) }  \times \cdots \times  \frac{ h_{T_{\ntimepoints}}(\logassetvalue_{T_{\ntimepoints}}) } { h_{T_{\ntimepoints}-1}(\logassetvalue_{T_{\ntimepoints}-1}) } \bigg] \label{eq:h_n} \nonumber \\
& = & h_{0}(\logassetvalue_{0}) \times \E{ \prod_{n=1}^{N} G_{n}(X_{n}) }, \label{eq:price_potential_functions}
\end{eqnarray} 
where
\begin{eqnarray}
G_{n}(X_{n}) & = & G_{n}(\logassetvalue_{n-1},\logassetvalue_{n}) = \frac{h_{n}(\logassetvalue_{n}) }{ h_{n-1}(\logassetvalue_{n-1}) }, \textrm{ } n \notin  \{ T_{1}+1, T_{2}+1, \ldots, T_{\ntimepoints-1}+1 \}, \nonumber \\
G_{T_{i}+1}(X_{T_{i}+1}) & = & G_{T_{i}+1}(\logassetvalue_{T_{i}},\logassetvalue_{T_{i}+1}) = h_{T_{i}+1}(\logassetvalue_{T_{i}+1}) \hspace{0.1in} \textrm{for } i = 1, 2, \ldots, \ntimepoints-1, \textrm{ and} \nonumber \\
h_{T_{i}}(\logassetvalue_{T_{i}}) & = & \mathbf{1} \{ \logassetvalue_{T_{i}} \in (\mathbf{L}_{i}, \mathbf{U}_{i}) \} \hspace{0.1in} \textrm{for } i = 1, 2, \ldots, \ntimepoints. \nonumber
\end{eqnarray}

The $h_{n}$'s, $n \notin \{T_{1}, T_{2}, \ldots, T_{\ntimepoints} \}$, are the sequence of positive weighting functions. In Algorithm \ref{algo:SMC_barrier}, they were simply 1. We attempt to choose them more carefully in order to approximate the optimal importance sampling densities. The algorithm is Algorithm \ref{algo:SMC_barrier_wt_fn}. \\

\begin{algorithm}
\caption{SMC for option pricing with weighting functions}\label{algo:SMC_barrier_wt_fn}
\begin{algorithmic}[1]

\State Set initial weights $\overline{W}_{0}^{(\particleindex)} = 1/ \nparticles$ and initial estimate of normalizing constant $\widehat{C}_{0} = 1$.

\For {$n=1$}

\State Sample $X_{1}^{(\particleindex)} \sim q_{1}(x_{1})$,

\State compute unnormalized weights
$$w_{1}^{(\particleindex)} = W_{0}^{(\particleindex)} \times G_{1} \left ( X_{1}^{(\particleindex)} \right ), $$

\State update estimate of normalizing constant $ \widehat{C}_{1} = \widehat{C}_{0} \times \sum_{\particleindex=1}^{\nparticles} w_{1}^{(\particleindex)}$,

\State compute normalized weights $ W_{1}^{(\particleindex)} \propto w_{1}^{(\particleindex)} $,

\If {resampling criterion is satisfied} 

\State resample $ \left \{ W_{1}^{(\particleindex)}, X_{1}^{(\particleindex)} \right \}_{\particleindex=1}^{\nparticles}$ to obtain $\nparticles$ equally-weighted particles $\left \{ 1/\nparticles, \overline{X}_{1}^{(\particleindex)} \right \}_{\particleindex=1}^{\nparticles}$ and set $ \left \{ \overline{W}_{1}^{(\particleindex)}, \overline{X}_{1}^{(\particleindex)} \right \}_{\particleindex=1}^{\nparticles} \leftarrow \left \{ 1/\nparticles, \overline{X}_{1}^{(\particleindex)} \right \}_{\particleindex=1}^{\nparticles}$,

\Else 

set $ \left \{ \overline{W}_{1}^{(\particleindex)}, \overline{X}_{1}^{(\particleindex)} \right \}_{\particleindex=1}^{\nparticles} \leftarrow \left \{ W_{1}^{(\particleindex)}, X_{1}^{(\particleindex)} \right \}_{\particleindex=1}^{\nparticles}$.

\EndIf

\EndFor

\For {$n \geq 2$}

\State Sample $X_{n}^{(\particleindex)} \sim q_{n}\left ( x_{n}|\overline{X}_{n-1}^{(\particleindex)} \right )$ and set $ X_{1:n}^{(\particleindex)} \leftarrow \left ( \overline{X}_{1:n-1}^{(\particleindex)}, X_{n}^{(\particleindex)} \right )$,

\State compute unnormalized weights
$$w_{n}^{(\particleindex)} = W_{n-1}^{(\particleindex)} \times G_{n} \left ( \mathbf{X}_{n}^{(\particleindex)} \right ), $$

\State update estimate of normalizing constant $ \widehat{C}_{n} = \widehat{C}_{n-1} \times \sum_{\particleindex=1}^{\nparticles} w_{n}^{(\particleindex)}$,

\State compute normalized weights $ W_{n}^{(\particleindex)} \propto w_{n}^{(\particleindex)} $,

\If {resampling criterion is satisfied} 

\State resample $ \left \{ W_{n}^{(\particleindex)}, X_{1:n}^{(\particleindex)} \right \}_{\particleindex=1}^{\nparticles}$ to obtain $\nparticles$ equally-weighted particles $\left \{ 1/\nparticles, \overline{X}_{1:n}^{(\particleindex)} \right \}_{\particleindex=1}^{\nparticles}$ and set $ \left \{ \overline{W}_{n}^{(\particleindex)}, \overline{X}_{1:n}^{(\particleindex)} \right \}_{\particleindex=1}^{\nparticles} \leftarrow \left \{ 1/\nparticles, \overline{X}_{1:n}^{(\particleindex)} \right \}_{\particleindex=1}^{\nparticles}$,

\Else 

set $ \left \{ \overline{W}_{n}^{(\particleindex)}, \overline{X}_{1:n}^{(\particleindex)} \right \}_{\particleindex=1}^{\nparticles} \leftarrow \left \{ W_{n}^{(\particleindex)}, X_{1:n}^{(\particleindex)} \right \}_{\particleindex=1}^{\nparticles}$.

\EndIf

\EndFor

\end{algorithmic}
\end{algorithm}

The estimated price is 
\begin{equation} \label{eq:nb_SMC_price}
\widehat{Q}_{D} = h_{0}(\logassetvalue_{0}) \times \widehat{C}_{N} \sum_{\particleindex=1}^{M} \overline{W}_{N}^{(\particleindex)} H \left ( \overline{\mathbf{S}}_{N}^{(\particleindex)} \right ).
\end{equation} 

Path degeneracy is again not an issue because we are still essentially estimating the normalizing constant. This estimate is unbiased and a proof is provided in the appendix. \\

Since $h_{T_{i}}$ is the indicator of $\logassetvalue_{T_{i}}$ being inside the barriers at time $T_{i}$, paths which are outside the barriers at the monitoring times are discarded with probability 1 in this case as well. However unlike in Algorithm \ref{algo:SMC_barrier}, here we seek to give higher weights to particles which we think have a higher chance of being inside the barriers at the monitoring times. What is being sought while using the weighting functions is an approximation to the optimal importance sampling density, that is, the density of $\mathbf{S}_{t}$ (at time $t$) conditional on it surviving at times $T_{i}$, $i = 1, 2, \ldots, p$. In the case of the barrier options being considered, this corresponds to $\mathbf{S}_{T_{i}} \in ( \mathbf{L}_{i}, \mathbf{U}_{i})$. This is aachieved by giving higher weights to particles which have a higher chance of surviving at times $T_{i}$. For example, particles which are far away from the barriers have a lower chance of survival than particles which are closer to the barriers. This is the intuitive idea behind our choice of weighting functions, and this is illustrated in Section \ref{numerics:barrier}.

\subsection{TARNs}

We consider a TARN based on a single underlying asset. Since the main problem arises when the function $f$ is discontinuous, we consider a discontinuous $f$ to illustrate our methods. Let

\[f(r) = \left\{ 
  \begin{array}{l l}
    2(r-110) + 20 & \quad \textrm{ if } r > 110 \\
   2(80-r) + 20 & \quad \textrm{ if } r < 90 \\
   -20 & \quad \textrm{ if } 90 \leq r \leq 110
  \end{array} \right.\]
\\
This function has two big jumps at $90$ and $110$. The  negative and positive cashflow cutoffs are $\Gamma_{L} = 100$ and $\Gamma_{G} = 200$. We recall that we had earlier assumed the interest rate is $0$. This is now justified. The main reason why standard MC cannot be used efficiently in this scenario is as follows. Using MC, most of the particles stay inside $(90,110)$ for the first five fixing dates. This leads to the contribution of the particle in the MC estimate being $-100$. However, an occasional particle escapes $(90,110)$ within the first five fixing dates and contributes a value that is significantly different from $-100$ because of the big jumps in $f$. This causes the variance of the MC estimate to be high. Even if the interest rate was positive, this difficulty would still remain. Therefore for simplicity we assume the interest rate to be $0$. \\

In order to go back to the previous notation, let $S = \log(R) \in \RR$ and define $g(s) = f(e^{s})$. Then let 

\begin{equation} \nonumber 
g(S_{1:N}) := 100 + \sum_{i=1}^{\tau} g(S_{T_{i}})
\end{equation}
be the new payoff function, where we have written $S_{1:N}$ in place of $(S_{t_{1}}, S_{t_{2}}, \ldots, S_{t_{N}})$. By defining $g$ in this way, the problem has been transformed into the format that was being using before. In this case,
\begin{eqnarray}
\mathcal{S} & = & \RR^{N}, \nonumber \\
\mathcal{S}_{1} & = & \left \{ s_{1:N} \in \RR^{N} : L_{5} := \sum_{i=1}^{5} g^{-}(s_{T_{i}}) = 100 \right \}, \nonumber \\
\mathcal{S}_{2} & = & \left \{ s_{1:N} \in \RR^{N} : L_{5} := \sum_{i=1}^{5} g^{-}(s_{T_{i}}) < 100 \right \}. \nonumber
\end{eqnarray}
A particle lies in $\mathcal{S}_{1}$ if (and only if) it stays within $(90,110)$ for the first five fixing dates. $\mathbb{P}(\mathcal{S}_{1}) \gg \mathbb{P} ( \mathcal{S}_{2} )$ if, for example, the volatility is low or the time intervals $T_{i} - T_{i-1}$ are small. It is noted that this is the opposite of the barrier option case (in which we considered the time intervals and volatility to be large). We again consider a sequence of positive weighting functions $h_{1}, h_{2}, \ldots, h_{T_{5}}$ and write

\begin{equation} \nonumber
g(s_{1:N}) = \left [ \prod_{n=1}^{T_{5}} \frac{ h_{n}(s_{n}) }{ h_{n-1}(s_{n-1}) } \right ] \times \frac{ g(s_{1:N}) } { h_{T_{5}} (s_{T_{5}}) },
\end{equation} 
where 
$ h_{0}(s_{0}) \equiv 1 $; we are basically doing the same thing as in the barrier option case. The goal is again to guide particles towards being in $\mathcal{S}_{2}$, and we show that this can be achieved through the usage of some simple weighting functions in Algorithm \ref{algo:SMC_barrier_wt_fn}.

\section{Numerical results}\label{sec:numerics}

In this section we demonstrate numerically the benefit of using weighting functions. In order to compare the standard deviations of Algorithms \ref{algo:SMC_barrier} and \ref{algo:SMC_barrier_wt_fn}, we run them 100 times with $100000$ particles in each run. We then look at the standard deviations of the 100 estimates and report the relative standard deviations (the ratio of the standard deviations).

\subsection{Barrier options} \label{numerics:barrier}

We consider a barrier option whose asset values evolve independently of each other. This translates to $\Sigma = I_{d}$. The option type is call. We assume that we can simulate forward one day at a time. A common monitoring strategy is to monitor the underlying assets after every $k$ days for a total of $\ntimepoints$ time periods. In that case, $T_{i} = ik$ for $i = 1, \ldots, \ntimepoints$ and $N = \ntimepoints k$. We choose $k = 540$ and $\ntimepoints=1$. Algorithm \ref{algo:SMC_barrier} resamples at the end of $\ntimepoints$ days and so resets the system of particles by choosing all resampled particles being inside the barriers. Therefore the gain that we expect by using Algorithm \ref{algo:SMC_barrier_wt_fn} can only be before the system is reset and this is why we choose $\ntimepoints = 1$. In what follows, we refer to Algorithm \ref{algo:SMC_barrier} as `MC' and Algorithm \ref{algo:SMC_barrier_wt_fn} as `SMC'. \\

The algorithms are run on different values of the dimension $d$ and the volatility $\boldsymbol\sigma = (\sigma_{1}, \ldots, \sigma_{d})$.  Recalling \eqref{eq:h_n}, the targeted density at any time $n$ is proportional to $h_{n}(\logassetvaluesmall_{n}) p_{n}(\logassetvaluesmall_{1:n})$, where $p_{n}(\logassetvaluesmall_{1:n})$ denotes the density of $\logassetvaluesmall_{1:n}$. This implies that the targeted marginal density at time $n$ is proportional to $h_{n}(\logassetvaluesmall_{n}) p_{n}(\logassetvaluesmall_{n})$, where $p_{n}(\logassetvaluesmall_{n})$ denotes the marginal density of $\logassetvaluesmall_{1:N}$ at time $n$. We denote the targeted density at time $n$ by $\widetilde{p}_{n}(\logassetvaluesmall_{n})$.

\subsubsection{Constant volatility model}

Since our basket consists of $d$ independent assets, we choose the marginal targeted densities (at different time points) to be the product of $d$ (unnormalized) densities. That is, $\widetilde{p}_{n} : \RR^{d} \rightarrow \RR$ is such that $ \widetilde{p}_{n}(\logassetvaluesmall_{n}) = \prod_{j=1}^{d} \widetilde{p}_{n,j}(s_{n,j})$, where $\logassetvaluesmall_{n} = (s_{n,1}, \ldots, s_{n,d})$. In the constant volatility case, the marginal density of $\logassetvalue_{1:N}$ at time $n$ is known and is denoted by $p_{n}(\logassetvaluesmall_{n}) = \prod_{j=1}^{d} p_{n,j}(s_{n,j})$; this is simply a product of $d$ Gaussians. Therefore the weighting functions are $ h_{n}(\logassetvaluesmall_{n}) = \prod_{j=1}^{d} \left [ \widetilde{p}_{n,j}(s_{n,j}) / p_{n,j}(s_{n,k}) \right ] $. \\

We choose $\sigma_{1} = \cdots = \sigma_{d} = \sigma$. Since our goal is to push particles towards being inside the barriers at time $k$, we consider a (one dimensional) Brownian bridge with volatility $\sigma$ tied down at $B_{0,j} = S_{0,j}$ and $B_{k,j} = K_{j} := (L_{j}+U_{j})/2$. However, since the variance of the Brownian bridge goes to 0 as the time approaches $k$, we cannot directly use it as it would lead to a high variance. So we add $0.2 \sigma$ to the standard deviation of the Brownian bridge. It is remarked that $0.2 \sigma$ is somewhat an arbitrary choice. This is not necessarily the best choice, and we do not claim so; this choice just serves to illustrate the benefit of using weighting functions. We have noticed gains even while using other values, all we need to do is to ensure that the targeted densities don't tend towards being degenerate. In practice, we introduce the weighting functions only after time $n = 2k/3$ because we want to let particles initially explore the space and then give higher weights to particles which are more likely to be inside the barriers at time $k$. \\

\begin{figure}[p]
    \centering
    \includegraphics[width=1\textwidth]{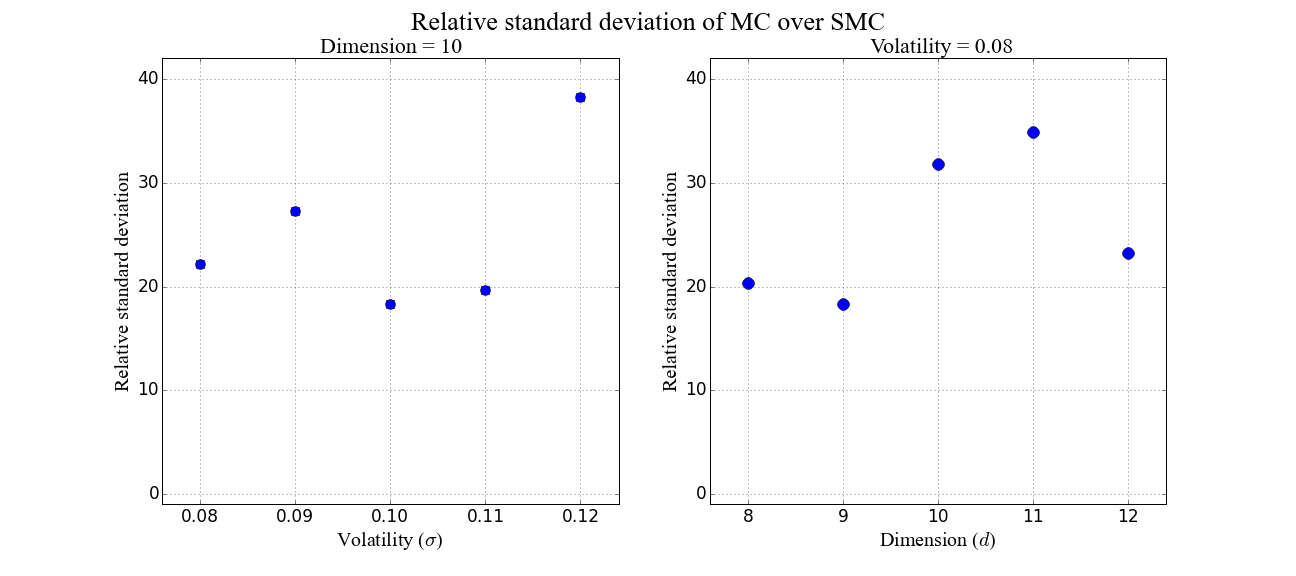}
    \caption{Relative standard deviations of MC over SMC (Algorithm \ref{algo:SMC_barrier_wt_fn}) for a barrier option with $m = 1$ and $k = 540$ for a constant volatility model when we target a Brownian bridge; in the left figure, the dimension $d = 10$; in the right figure, the volatility $\sigma = 0.08$.}
    \label{fig:fig2}
\end{figure} 

The results are in Figure \ref{fig:fig2}. We observe a gain in the standard deviation by using these functions. In general, if the volatilities, initial prices and strike prices were different for the different chains, we could have considered a product of $d$ dissimilar Brownian bridge densities as the weighting functions. \\

Since the actual goal is to estimate the optimal importance density, we try to do it in our simple setup to see how much better it does. The marginal densities of a particle given that it survives at time $k$ can be approximated by Gaussian densities. In fact, this is what we were doing earlier when we choose the targeted distribution to be a Brownian bridge. Instead of working with a Brownian bridge, we can try to estimate the means and the variances of the Gaussian densities by first simulating particles in one dimension and then looking at the means and variance of those that survive. \\

In this regard, we simulate $M_{1} = 10000$ particles in one dimension with volatility $\sigma = 0.08$. We then look at the particles that survive at the end. Given the particles that survive, we look at their marginal means and variances. This approximates the marginal densities (in one dimension) of a particle given that it survives. Since we are in one dimension, the number of particles surviving is much higher than what we would have in higher dimensions (about 3900 out of 10000 in our simulations). This means that the probability of survival in one dimension is about $0.39$ for $\sigma = 0.08$, which for example in 10 dimensions would mean a probability of survival of $8.1 \times 10^{-5}$ (since the chains are independent). We call this the `optimal' target. \\

The results are in Figure \ref{fig:fig3}. We can observe an improvement against the standard MC method, relative to Figure \ref{fig:fig2}. However, the gains are not very drastic and serve to illustrate that even the simple intuitive use of a Brownian bridge does well. It is worth noting that even by choosing the volatility to be $0.08$ and running the chain in one dimension, we observe a gain for other values of the volatility as well. This suggests that even if the volatilities were different for the different chains, we could have simply run a chain in one dimension with a value of the volatility that is in the range of the volatilities and still get decent results. \\

\begin{figure}[p]
    \centering
    \includegraphics[width=1\textwidth]{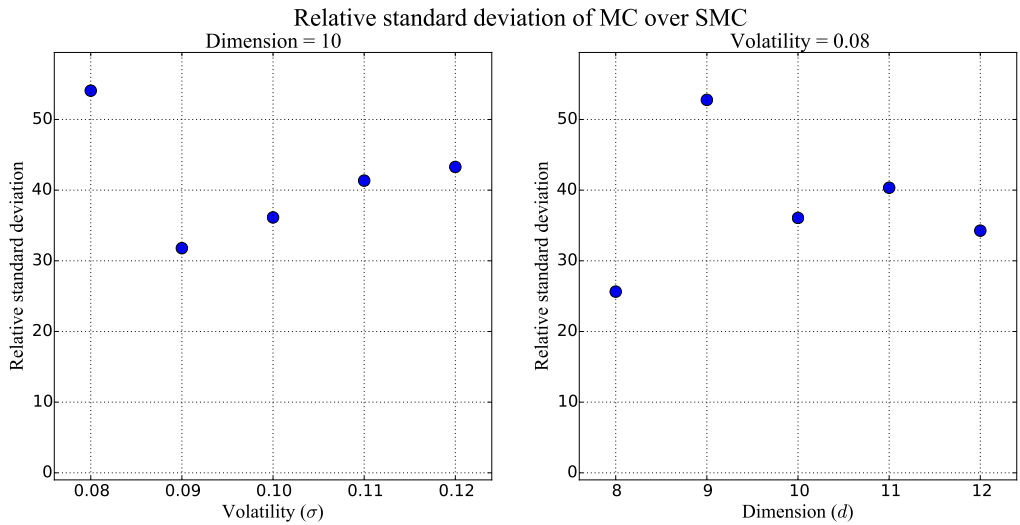}
    \caption{Relative standard deviations of MC over SMC (Algorithm \ref{algo:SMC_barrier_wt_fn}) for a barrier option with $m = 1$ and $k = 540$ for a constant volatility model when we target the `optimal' target; in the left figure, the dimension $d = 10$; in the right figure, the volatility $\sigma = 0.08$.}
    \label{fig:fig3}
\end{figure} 

A note on running times: we have observed that the running time for MC is less than three times that of SMC (even without having optimized the codes too much). Thus if we were to take running times into account, we could have run MC with $3\nparticles$ particles in the same time in which we run SMC with $\nparticles$ particles. The standard deviation for MC using $3\nparticles$ particles would be $\sqrt{3}$ times less than that of MC with $\nparticles$ particles. So even by taking running times into account, SMC outperforms MC. In the last example, in addition we run MC with $10000$ particles. We should take this time into account as well, but we run it for only one value of the volatility and it is fast. The reason we do not consider the running times in our figures is that we believe it might be possible to optimize the codes even further. The intention is to demonstrate the benefit of using weighting functions in SMC.

\subsubsection{Local volatility model}

Local volatility models are in practice used more frequently as they are typically more accurate. The local volatility function considered here is a linear interpolation between the values of the volatility $\sigma$ at $0.12, 0.11, 0.105, 0.101, 0.097, 0.093, 0.098, 0.10, 0.105, 0.11, 0.17$ and the values of the underlying $R$ at $10^{-6}, 60, \\ 70, 80, 90, 100, 110, 120, 130, 140, 10^{6}$. In this case, we do not know the marginal densities of $\logassetvalue_{1:N}$. For simplicity, we still assume that the volatility functions are the same for all the chains and approximate it in the following way. Consider a single chain. If we knew  $\EE S_{1}$, $\EE S_{2}, \ldots$, then we could approximate $\sigma^{2} (S_{n})$ by $\sigma^{2} (\EE S_{n})$. We approximate $\EE S_{n}$ iteratively as follows (assuming that the time step of the discretization $\delta t_{n}$ is constant):

\begin{eqnarray}
\CE{ S_{n} }{ S_{n-1} } & = & S_{n-1} - \frac{1}{2} \sigma^{2}(S_{n-1}) \delta t \nonumber \\
\Rightarrow \EE S_{n} & = & \EE S_{n-1} - \frac{1}{2} \delta t \E{ \sigma^{2}(S_{n-1})} \nonumber \\
& \approx & \EE S_{n-1} - \frac{1}{2} \delta t \sigma^{2} \left ( \EE S_{n-1} \right ); \label{eq:approx_expectation}
\end{eqnarray}
therefore we approximate $\EE S_{n}$ and $\sigma^{2}(\EE S_{n})$ iteratively. We first approximate $\EE S_{1}$ by \eqref{eq:approx_expectation}; given this estimate, we approximate $\EE S_{2}$ by \eqref{eq:approx_expectation} and keep doing this. We approximate $p_{n}(\logassetvalue_{n})$ by
$$ \widehat{p}_{n} (\logassetvalue_{n}) = \prod_{j=1}^{d} \phi \left ( s_{n,j} \bigg| \EE S_{n-1} - \frac{1}{2} \sigma^{2} ( \EE S_{n-1} ) \delta t, \sigma^{2}(\EE S_{n-1} ) n \delta t \right ) $$
iteratively. 

\begin{enumerate}

\item We begin with a Brownian bridge weighting function using the previously estimated values of $\sigma(\EE S_{n})$ in the target density and add $0.2 \sigma(\EE S_{n})$ to the standard deviation as before.

\item As in the constant volatility model, we simulate $M_{1} = 10000$ particles (for the local volatility model) in one dimension and look at the marginal means and variance of the particles that survive. We call this the `optimal' target and target it in the SMC algorithm. 

\end{enumerate}

The results are in figure \ref{fig:barrier_lv}. We observe a significant gain, which as expected, gets more significant as the dimension increases. This is because the probability of a particle surviving gets smaller as the dimension increases. The gains are more significant when we target the `optimal'. As far as running times are concerned, the observations are the same as in the case of the constant volatility model. Thus Algorithm \ref{algo:SMC_barrier_wt_fn} outperforms Algorithm \ref{algo:SMC_barrier} even upon taking running times into consideration.

\begin{figure}[p]
    \centering
    \includegraphics[width=1\textwidth]{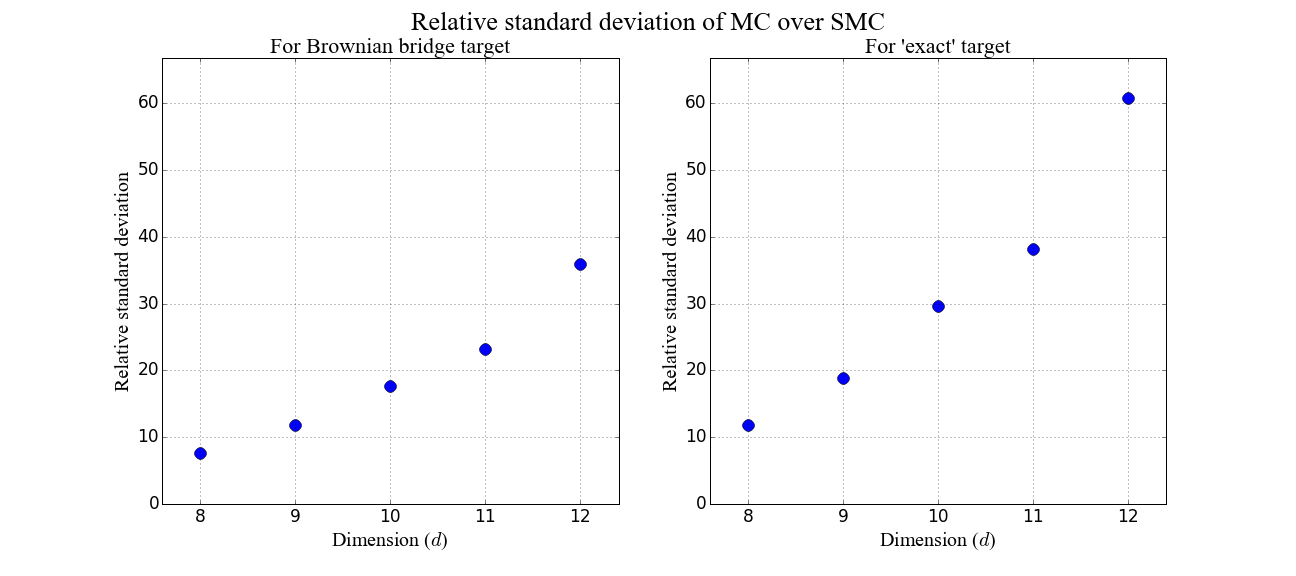}
    \caption{Relative standard deviations of MC over SMC (Algorithm \ref{algo:SMC_barrier_wt_fn}) for a barrier option with $m = 1$ and $k = 540$ for local volatility model; the left figure is when we target a Brownian bridge; the right figure is when we target the `optimal'.}
    \label{fig:barrier_lv}
\end{figure} 

\subsection{TARNs} \label{numerics:TARN}

We again assume that we can simulate forward one day at a time. We take the number of fixing dates $\ntimepoints = 24$ and the interval between fixing dates $k = 30$ days. This roughly corresponds to a TARN where there is a cash flow at the end of every month for a maximum of two years. We consider different values of the volatility. A particle leads to a zero payoff if it stays within $(-90,110)$ for the first five months and to a positive payoff if it escapes by the end of the fourth month. We again compare standard MC and Algorithm \ref{algo:SMC_barrier_wt_fn} (referring to this as `SMC'). We note that contrary to barrier options, in this case we expect SMC to do better than usual MC for \textit{low} values of the volatility. This is because in this case the region $\mathcal{S}_{2}$ has low probability for lower values of the volatility unlike in the barrier option case where $\mathcal{S}_{2}$ had low probability for higher values of the volatility. This is why we restrict ourselves to lower values of the volaitility in this section than in Section \ref{numerics:barrier}. 

\subsubsection{Constant volatility model}

When the underlying follows Black-Scholes dynamics with a constant volatility $\sigma$, we can directly simulate forward $k$ days at a time using the Euler-Maruyama discretization. We run the algorithms different values of the volatility and report the results. We try three (increasingly more sophisticated, yet intuitive) weighting functions to try and get particles to escape. \\

In the most naive version, we choose $h_{n}(s_{1:n})$ to simply be $(s_{n} - S_{0})^{2}$. This just gives higher weights to particles that are further away from $S_{0}$ at the end of the $n$-th month. We obtain the results in Figure \ref{fig:tarn_cv}. \\
 
\begin{figure}[p]
    \centering
    \includegraphics[width=1\textwidth]{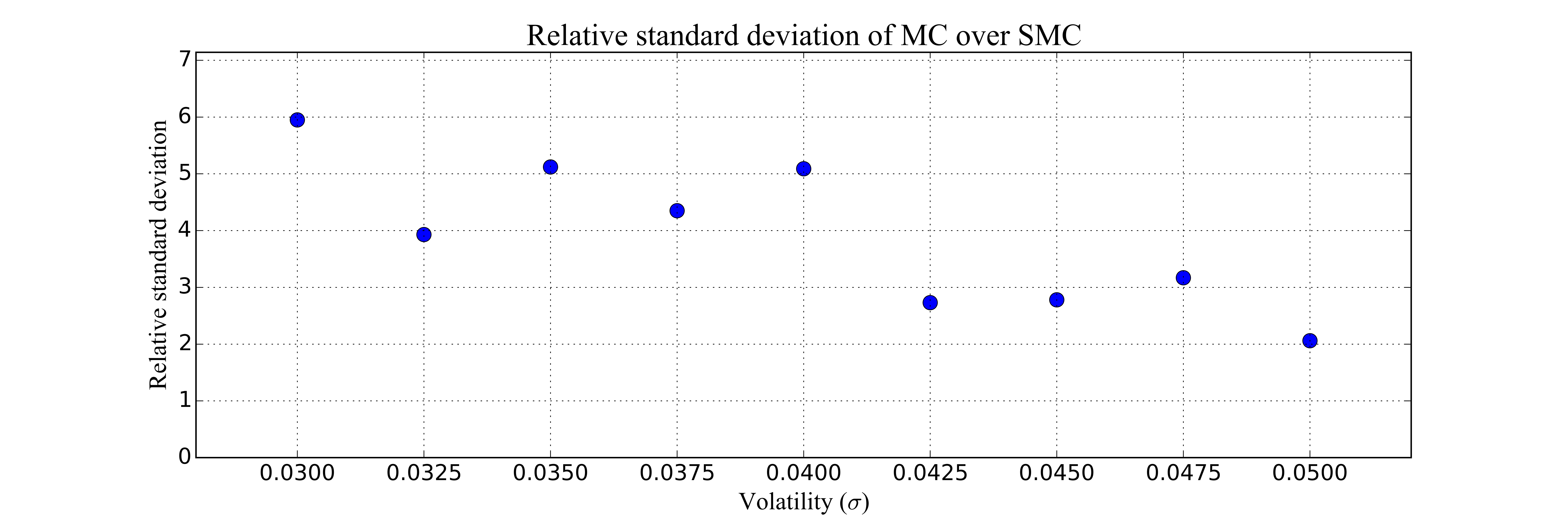}
    \caption{Relative standard deviations of MC over SMC (Algorithm \ref{algo:SMC_barrier_wt_fn}) using most naive weighting functions for TARN in the constant volatility case.}
    \label{fig:tarn_cv}
\end{figure}

We then notice that the targeted density at time $n$ is proportional $h_{n}(s_{1:n}) p(s_{1:n})$ for $1 \leq n \leq 5$. If we choose $h_{n}(s_{1:n})$ to be of the form $h_{n}(s_{n})$, then the marginal of the targeted density at time $n$ is proportional to $h_{n}(s_{n}) p_{n}(s_{n})$, where $p_{n}(s_{n})$ is the marginal density of $S_{1:N}$ at time $n$. Since we target $(s_{n} - S_{0})^{2}$, we choose $ h_{n}(s_{n}) = (s_{n} - S_{0} )^{2} / p_{n}(s_{n}) $ and this leads to the results in Figure \ref{fig:tarn_cv1}. It is seen that SMC with weighting functions does much better in most cases, except in the last few cases. The last few cases correspond to higher volatilities, in which case the probability of a particle escaping increases. SMC does better when the probability of a particle escaping is low, which is what we wanted. \\

\begin{figure}[p]
    \centering
    \includegraphics[width=1\textwidth]{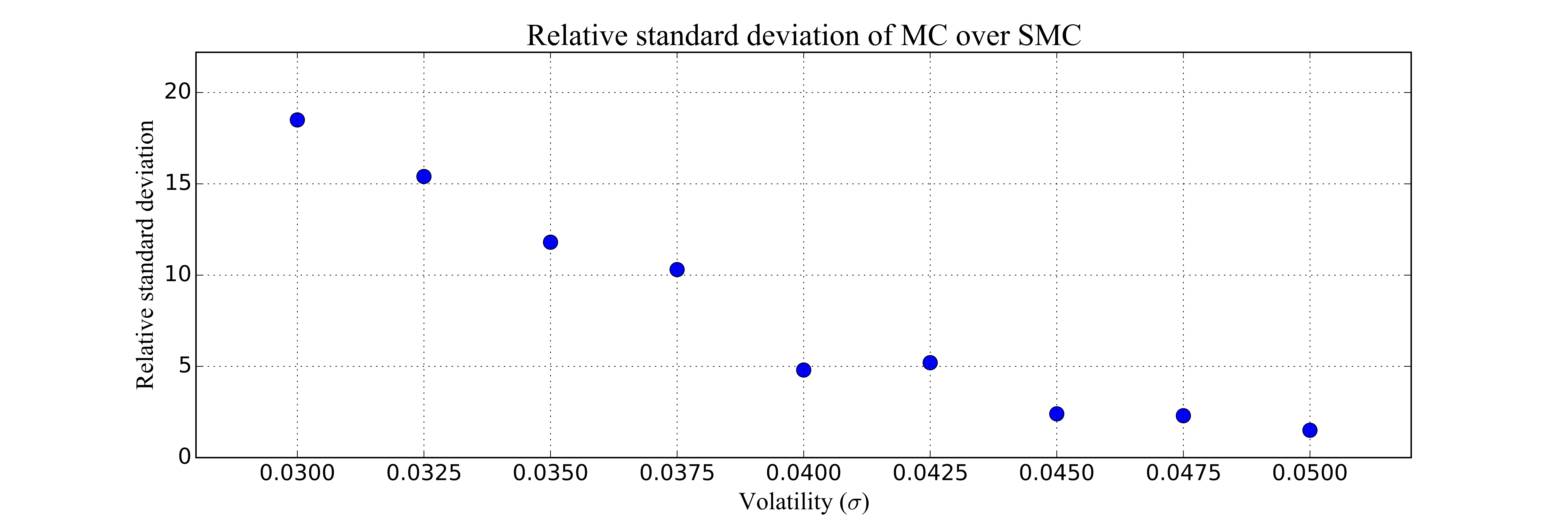}
    \caption{Relative standard deviations of MC over SMC (Algorithm \ref{algo:SMC_barrier_wt_fn}) using simple weighting functions for TARN in the constant volatility case.}
    \label{fig:tarn_cv1}
\end{figure}

Looking into the problem further, we observe that particles which escape can do so in one of two ways: either to the left or to the right. We fix a value of $\sigma = 0.05$ and run $10000$ copies of the chain. Our goal is to look at the exact marginal distribution of particles of particles which have escaped. We observe that approximately $20\%$ of particles escape to the left and $80 \%$ escape to the right. This is of course for a fixed $\sigma$, but we use these marginal approximations as our target (we use a mixture of normals) and run the SMC algorithm for different values of $\sigma$. The results are in  Figure \ref{fig:tarn_cv2}. Significant gains are observed in this case as well, and are more than the gains before. In this case as well, we are trying to approximate the optimal importance density. \\

\begin{figure}[p]
    \centering
    \includegraphics[width=1\textwidth]{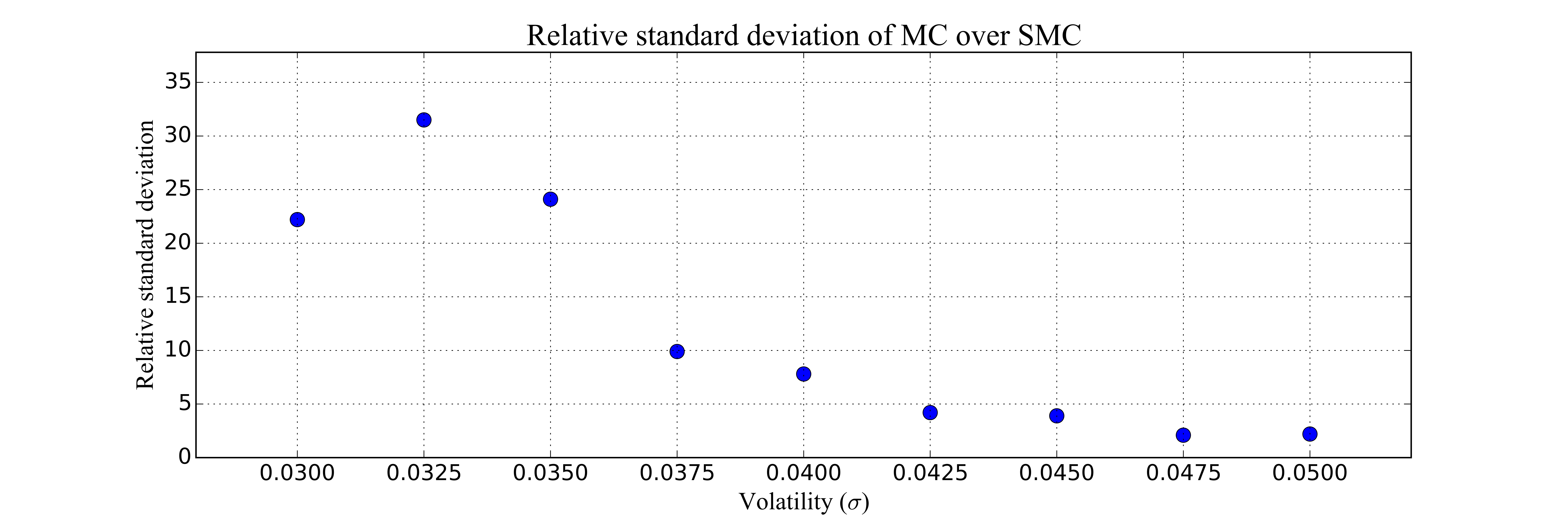}
    \caption{Relative standard deviations of MC over SMC (Algorithm \ref{algo:SMC_barrier_wt_fn}) using a mixture of normal densities as the target for TARN in the constant volatility case.}
    \label{fig:tarn_cv2}
\end{figure}

In all of the above three examples in this section, we have observed that the running time for SMC is less than twice that of MC. We can thus have similar conclusions as in the case of barrier options, which shows that SMC outperforms MC even upon taking running times into consideration. 

\subsubsection{Local volatility model} 

In this case, the volatility $\sigma$ is a function of the price of the underlying $R_{t}$. We cannot simulate forward $k$ days at a time any more and we simulate forward one day at a time now. In this case the weighting functions are introduced each day for the first 5 time periods. \\

We consider a volatility function which is minimum at $100$ and increases on either side of $100$. The value at $100$ is $0.036$. The values in the $(90,110)$ are less than $0.04$ and are higher outside it. We choose this because a low value of $\sigma$ would mean less particles escaping, but for the ones that do escape we let explore the space more freely. The local volatility function is a linear interpolation between the values of the volatility $\sigma$ at $0.055 ,0.051, 0.045, 0.041, 0.037, 0.035, 0.038,0.04, 0.045, 0.05, 0.055$ and the values of the underlying $R$ at $10^{-6}, 60, 90, 93, 98, 100, 103, 107, 110, 140, 10^{6}$. \\

We start off with the simplest weighting function $h_{n}(s_{1:n}) = (s_{n} - S_{0})^{2}$. In this case, we observe that Algorithm \ref{algo:SMC_barrier_wt_fn} has a standard deviation which is 1.83 times lower than that of MC. \\

Since the marginal density of $S_{n}$ is unknown, we choose a crude approximation of the marginal and use it. We choose 
$$ \widehat{p}_{n}(s_{n}) = \phi \left ( s_{n} \bigg| S_{0} + \left( \mu - \frac{1}{2} 0.04^{2} \right ) \delta t_{n}, 0.04^{2} \delta t_{0} \right ) $$ 
to be the approximation. This is the density of $s_{n}$ if the volatility had been a constant $0.04$. We run $10000$ copies of the chain and look at the ones which escape towards the left and towards the right. We then choose a mixture of normals (in this case, the proportions $0.3$ and $0.7$) and repeat the experiments. We observe that SMC has a standard deviation which is 2.33 times less than that of MC. \\ 

The running times for SMC are twice that of usual MC in this case. This implies that while there are some gains to be achieved in using SMC in this example, they are not as significant as before. This is because of the fact that the local volatility function is low (in the region of around $0.035$) only for a few values of the underlying, which suggests that one shouldd actually give even higher weights to force particles to escape. Nevertheless, this does serve to illustrate the potential benefits of using SMC over MC.

\section{Conclusion}

We have formally presented the idea of weighting functions in SMC. The main idea is to try to estimate the sequence of optimal importance densities using a relatively heuristic technique of weighting functions, giving more weights to potentially favourable particles. We demonstrated this on two examples from finance, but this can be extended to other cases (for instance those in \cite{SMC_option_jasra, jasra_doucet}). We have also seen that as we get closer to the optimal importance density the gains keep increasing. However even using an approximation leads to significant gains. Since the idea is quite general, it can potentially be used in finance to price other kinds of path dependent options, for instance Asian options as in \cite{SMC_option_jasra}. It can also be used in the context of the estimation of marginal expectations w.r.t.~laws of jump-diffusions (and hence some control problems); see for instance \cite{jasra_doucet}. The basic notion of weighting functions is also applied in \cite{beskos1} for high-dimensional filtering problems. In summary the ideas in the article provide a simple formalisation of many that have previously appeared in the literature. \\

There are many extensions of the work in the article. For instance one can adaptively determine the potential functions in the sequence, albeit at the cost of statistical bias. Other ideas include further applications and adaptation to problems where the multi-level Monte Carlo method can be used (e.g.~\cite{beskos1}).

\appendix

\section{Unbiasedness of Estimate}

We provide a proof of the unbiasedness of the estimate when the resampling is done adaptively and multinomially. 

\subsection{Additional Definitions}

Let $\tau_{1} < \tau_{2} < \cdots < \tau_{r}$ be the successive resampling times and let $\widetilde{\tau}(n)$ be the most recent resampling time before time $n$, with $\widetilde{\tau}(n) = 0$ if no resampling has been carried out before time $n$. Also define $\tau_{0} = 0$ and $\tau_{r+1} = N$. Define 
$$ v_{n} ( x_{1:n}) = \prod_{t = \widetilde{\tau}(n) + 1}^{n} \alpha_{n}(x_{1:n}), \hspace{0.15in} V_{n}^{(\particleindex)} = \frac{ v_{n} \left ( X_{1:n}^{(\particleindex)} \right )} { \nparticles \overline{v}_{n} }, $$
where 
$$ \overline{v}_{n} = \frac{1}{\nparticles} \sum_{j=1}^{\nparticles} v_{n} \left ( X_{1:n}^{(j)} \right ). $$
If resampling is carried out at time $n$, then $\left \{ V_{n}^{(\particleindex)} \right \}_{\particleindex=1}^{\nparticles}$ are the resampling weights. 
\\ \\
For any test function $\psi : \mathcal{X}^{N} \rightarrow \RR$, we estimate $ \psi_{N} := \EE_{\pi_{N}} \sqBK{ \psi(X_{1:N}) }$ by $ \widehat{\psi}_{OR} = \sum_{\particleindex=1}^{\nparticles} V_{N}^{(\particleindex)} \psi \left ( X_{1:N}^{(\particleindex)} \right ) $ and we estimate the normalizng constant by $ Z_{N}^{M} = \prod_{s=1}^{r+1} \overline{v}_{\tau_{s}} $.
Define:
\begin{equation} 
\widetilde{H}_{\tau_{s}}^{(\particleindex)} = \overline{v}_{1} \cdots \overline{v}_{\tau_{s}} h_{\tau_{s}} \left ( X_{1:\tau_{s}}^{(\particleindex)} \right ), \hspace{0.15in} H_{\tau_{s}}^{(\particleindex)} =  \overline{v}_{1} \cdots \overline{v}_{\tau_{s}} h_{\tau_{s}} \left ( \overline{X}_{1:\tau_{s}}^{(\particleindex)} \right ), \nonumber
\end{equation}
where
\begin{equation}
h_{\tau_{s}}(x_{1:\tau_{s}} ) = \frac{1}{ \prod_{n=1}^{\tau_{s}} \alpha_{n} (x_{1:n} ) }. \nonumber
\end{equation}
\\
Observe that
\begin{eqnarray} 
\frac{ H^{(\particleindex)}_{\tau_{s-1}}}{\widetilde{H}_{\tau_{s}}^{(\particleindex)}} & = & \frac{1}{\overline{v}_{\tau_{s}}} \times \frac{ \prod_{n=1}^{\tau_{s}} \alpha_{n} \left ( X_{1:n}^{(\particleindex)} \right )} {  \prod_{n=1}^{\tau_{s-1}} \alpha_{n} \left ( \overline{X}_{1:n}^{(\particleindex)} \right )} \nonumber \\
& = & \frac{ v_{\tau_{s}} \left ( X_{1:\tau_{s}}^{(\particleindex)} \right ) } { \overline{v}_{\tau_{s}} } \nonumber \\ 
& = & \nparticles V_{\tau_{s}}^{(\particleindex)} \nonumber \\
\Rightarrow  \widetilde{H}_{\tau_{s}}^{(\particleindex)} V_{\tau_{s}}^{(\particleindex)} & = & \frac{1}{\nparticles} H^{(\particleindex)}_{\tau_{s-1}}. \label{eq:3.new}
\end{eqnarray}
Define the likelihod ratio as 
$$ L_{N} (x_{1:N}) = \frac{ \pi_{N}(x_{1:N}) } { \prod_{n=1}^{N} q_{n}(x_{n}|x_{1:n-1}) }, $$
and let 
\begin{eqnarray}
\widetilde{\psi}_{OR} & = & \frac{1}{\nparticles} \sum_{\particleindex=1}^{\nparticles} L_{N} \left ( X_{1:N}^{(\particleindex)} \right ) \psi \left ( X_{1:N}^{(\particleindex)} \right ) H_{\widetilde{\tau}(N)}^{(\particleindex)} \nonumber \\
& = & \frac{1}{\nparticles Z_{N}} \sum_{\particleindex=1}^{\nparticles} \left [ \frac{ \gamma_{N} \left ( X_{1:N}^{(\particleindex)} \right )}{ \prod_{n=1}^{N} q_{n} \left ( X_{n}^{(\particleindex)} | X_{1:n-1}^{(\particleindex)} \right ) } \psi \left ( X_{1:N}^{(\particleindex)} \right ) \frac{ \overline{v}_{1} \cdots \overline{v}_{\tau_{r}} } {\prod_{n=1}^{\tau_{r}} \alpha_{n} \left ( \overline{X}_{1:n}^{(\particleindex)} \right ) } \right ] \nonumber \\
& = & \frac{ \overline{v}_{1} \cdots \overline{v}_{\tau_{r}} \overline{v}_{\tau_{r+1}} }{Z_{N}} \sum_{\particleindex=1}^{\nparticles} \frac{ v_{N} \left ( X_{1:N}^{(\particleindex)}\right )} {\nparticles \overline{v}_{N} } }{  \psi \left ( X_{1:N}^{(\particleindex)} \right ) \nonumber \\
& = & \frac{ Z_{N}^{\nparticles} }{ Z_{N} } \widehat{\psi}_{OR}. \nonumber 
\end{eqnarray} 
The third equality is because $\tau_{r+1} = N$. 
Define $A_{\tau_{s}}^{(\particleindex)}$ iteratvely as $A_{\tau_{0}}^{(\particleindex)} =1$ and $A_{\tau_{s}}^{(\particleindex)} = A_{\tau_{s-1}}^{ \left ( I_{\tau_{s-1}}^{(\particleindex)} \right ) } $, where we recall that $I_{\tau_{s-1}}^{(\particleindex)}$ is the resampled index of the $\particleindex$-th particle at the $\tau_{s-1}$-th resampling step. Let
\begin{eqnarray}
\FF_{2t-1} & = & \sigma \left ( \left \{ X_{1}^{(\particleindex)}: 1 \leq \particleindex \leq \nparticles \right \} \bigcup \left \{ \left ( \overline{X}_{1:\tau_{s}}^{(\particleindex)}, X_{1:\tau_{s}+1}^{(\particleindex)}, A_{\tau_{s}}^{(\particleindex)} \right ) : 1 \leq s < t, 1 \leq \particleindex \leq \nparticles \right \} \right ) , \nonumber \\
\FF_{2t} & = & \sigma \left ( \FF_{2t-1} \bigcup \left \{ \left ( \overline{X}_{1:\tau_{t}}^{(\particleindex)}, A_{\tau_{t}}^{(\particleindex)} \right ) : 1 \leq \particleindex \leq \nparticles \right \} \right ); \nonumber 
\end{eqnarray}
these are the $\sigma$-fields generated by the random variables associated with the $M$ particles just before and just after the $t$-th resampling step respectively. Let $\widetilde{f}_{0} (\cdot) \equiv \psi_{N}$ and define for $1 \leq n \leq N$, 
\begin{equation} \label{eq:3.2}
\widetilde{f}_{n} ( x_{1:n}) = \EE_{q} \sqBK{ \psi ( \overline{X}_{1:N}) L_{N} ( \overline{X}_{1:N} ) | \overline{X}_{1:n} = x_{1:n} },
\end{equation}
where $\EE_{q}$ denotes expectation under the proposal density. Then $ \widetilde{f}_{n} ( x_{1:n}) = \EE_{q} \sqBK{ \widetilde{f}_{N} ( \overline{X}_{1:N}) \bigg| \overline{X}_{1:n} = x_{1:n} }$. Let
\begin{eqnarray}
Z_{2s-1}^{(\particleindex)} & = & \left [ \widetilde{f}_{\tau_{s}} \left ( X_{1:\tau_{s}}^{(\particleindex)} \right ) - \widetilde{f}_{\tau_{s-1}} \left ( \overline{X}_{1:\tau_{s-1}}^{(\particleindex)} \right ) \right ] H_{\tau_{s-1}}^{(\particleindex)}, \nonumber \\
Z_{2s}^{(\particleindex)} & = & \widetilde{f}_{\tau_{s}} \left ( \overline{X}_{1:\tau_{s}}^{(\particleindex)} \right ) H_{\tau_{s}}^{(\particleindex)} - \sum_{j=1}^{\nparticles} V_{\tau_{s}}^{(j)} \widetilde{f}_{\tau_{s}} \left ( X_{1:\tau_{s}}^{(j)} \right ) \widetilde{H}_{\tau_{s}}^{(j)}. \nonumber 
\end{eqnarray}

\subsection{The Main Result}

\begin{prop}\label{prop:unbiased}
$ \left \{ \left ( Z_{k}^{(1)}, \ldots, Z_{k}^{(M)} \right ), \FF_{k} : 1 \leq k \leq 2r + 1 \right \} $ is a martingale difference series and
$$ M \left ( \widetilde{\psi}_{OR} - \psi_{N} \right ) = \sum_{k=1}^{2r+1} \left ( Z_{k}^{(1)} + \cdots + Z_{k}^{(\nparticles)} \right ) $$
\end{prop}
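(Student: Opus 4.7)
My plan is to verify the martingale-difference structure first and then derive the additive identity by a telescoping sum over $s$, using \eqref{eq:3.new} as the bridge between the pre-resampling and post-resampling quantities. For the odd-indexed increments, $H_{\tau_{s-1}}^{(\particleindex)}$ is $\FF_{2s-2}$-measurable, and conditional on $\FF_{2s-2}$ the trajectory $X_{1:\tau_{s}}^{(\particleindex)}$ is obtained by extending $\overline{X}_{1:\tau_{s-1}}^{(\particleindex)}$ through the proposal kernels $q_{\tau_{s-1}+1}, \ldots, q_{\tau_{s}}$. The tower property encoded in \eqref{eq:3.2} then yields $\EE_{q}\bigl[\widetilde{f}_{\tau_{s}}(X_{1:\tau_{s}}^{(\particleindex)}) \mid \FF_{2s-2}\bigr] = \widetilde{f}_{\tau_{s-1}}(\overline{X}_{1:\tau_{s-1}}^{(\particleindex)})$, so $\CE{Z_{2s-1}^{(\particleindex)}}{\FF_{2s-2}} = 0$. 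For the even-indexed increments, conditional on $\FF_{2s-1}$ the resampling selects $I_{\tau_{s}}^{(\particleindex)} = j$ with probability $V_{\tau_{s}}^{(j)}$, and since $\overline{X}_{1:\tau_{s}}^{(\particleindex)} = X_{1:\tau_{s}}^{(I_{\tau_{s}}^{(\particleindex)})}$ one also has $H_{\tau_{s}}^{(\particleindex)} = \widetilde{H}_{\tau_{s}}^{(I_{\tau_{s}}^{(\particleindex)})}$. Therefore $\CE{\widetilde{f}_{\tau_{s}}(\overline{X}_{1:\tau_{s}}^{(\particleindex)}) H_{\tau_{s}}^{(\particleindex)}}{\FF_{2s-1}} = \sum_{j=1}^{\nparticles} V_{\tau_{s}}^{(j)} \widetilde{f}_{\tau_{s}}(X_{1:\tau_{s}}^{(j)}) \widetilde{H}_{\tau_{s}}^{(j)}$, so $\CE{Z_{2s}^{(\particleindex)}}{\FF_{2s-1}} = 0$.

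For the additive identity I would compute the particle sums level by level. Summing $Z_{2s-1}^{(\particleindex)}$ over $\particleindex$ yields $\sum_{\particleindex} \widetilde{f}_{\tau_{s}}(X_{1:\tau_{s}}^{(\particleindex)}) H_{\tau_{s-1}}^{(\particleindex)} - \sum_{\particleindex} \widetilde{f}_{\tau_{s-1}}(\overline{X}_{1:\tau_{s-1}}^{(\particleindex)}) H_{\tau_{s-1}}^{(\particleindex)}$. In the sum of $Z_{2s}^{(\particleindex)}$, the subtracted term does not depend on $\particleindex$ and contributes $\nparticles \sum_{j} V_{\tau_{s}}^{(j)} \widetilde{f}_{\tau_{s}}(X_{1:\tau_{s}}^{(j)}) \widetilde{H}_{\tau_{s}}^{(j)}$; applying the key identity \eqref{eq:3.new} in the form $\nparticles V_{\tau_{s}}^{(j)} \widetilde{H}_{\tau_{s}}^{(j)} = H_{\tau_{s-1}}^{(j)}$ collapses this to $\sum_{j} \widetilde{f}_{\tau_{s}}(X_{1:\tau_{s}}^{(j)}) H_{\tau_{s-1}}^{(j)}$, which cancels the first piece of $\sum_{\particleindex} Z_{2s-1}^{(\particleindex)}$. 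What survives is a clean telescope in $s$, namely $\sum_{\particleindex} \widetilde{f}_{\tau_{s}}(\overline{X}_{1:\tau_{s}}^{(\particleindex)}) H_{\tau_{s}}^{(\particleindex)} - \sum_{\particleindex} \widetilde{f}_{\tau_{s-1}}(\overline{X}_{1:\tau_{s-1}}^{(\particleindex)}) H_{\tau_{s-1}}^{(\particleindex)}$.

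Telescoping this over $s = 1, \ldots, r$ leaves $\sum_{\particleindex} \widetilde{f}_{\tau_{r}}(\overline{X}_{1:\tau_{r}}^{(\particleindex)}) H_{\tau_{r}}^{(\particleindex)} - \nparticles \psi_{N}$, where the lower boundary uses $\tau_{0} = 0$, $\widetilde{f}_{0} \equiv \psi_{N}$, and $H_{0}^{(\particleindex)} \equiv 1$ (the empty product). The remaining contribution $\sum_{\particleindex} Z_{2r+1}^{(\particleindex)}$ corresponds to the odd step with $s = r+1$ and $\tau_{r+1} = N$, so it equals $\sum_{\particleindex} \widetilde{f}_{N}(X_{1:N}^{(\particleindex)}) H_{\tau_{r}}^{(\particleindex)} - \sum_{\particleindex} \widetilde{f}_{\tau_{r}}(\overline{X}_{1:\tau_{r}}^{(\particleindex)}) H_{\tau_{r}}^{(\particleindex)}$, cancelling the post-$\tau_{r}$ remnant. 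Since $\widetilde{f}_{N}(x_{1:N}) = \psi(x_{1:N}) L_{N}(x_{1:N})$ directly from \eqref{eq:3.2}, and $\widetilde{\tau}(N) = \tau_{r}$, the leftover is $\sum_{\particleindex} L_{N}(X_{1:N}^{(\particleindex)}) \psi(X_{1:N}^{(\particleindex)}) H_{\widetilde{\tau}(N)}^{(\particleindex)} = \nparticles \widetilde{\psi}_{OR}$, yielding $\nparticles(\widetilde{\psi}_{OR} - \psi_{N})$ as required.

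The hardest step will be the combinatorial bookkeeping in the middle paragraph: one has to apply \eqref{eq:3.new} in precisely the right place so that the resampling-weighted term inside $Z_{2s}$ matches, after re-indexing over particles, the pre-resampling term inside $Z_{2s-1}$, and then identify the boundary contributions at $s = 0$ and $s = r+1$ with $\psi_{N}$ and $\widetilde{\psi}_{OR}$ respectively. Once this alignment is established, the martingale-difference property follows routinely from the tower property of $\widetilde{f}_{n}$ and the definition of multinomial resampling.
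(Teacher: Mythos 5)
Your proposal is correct and takes essentially the same route as the paper's proof: the identity follows from the cancellation produced by \eqref{eq:3.new} together with the telescoping of the $\widetilde{f}_{\tau_{s}}\bigl(\overline{X}_{1:\tau_{s}}^{(\particleindex)}\bigr) H_{\tau_{s}}^{(\particleindex)}$ terms and the boundary identifications $\widetilde{f}_{0}\equiv\psi_{N}$, $H_{0}^{(\particleindex)}=1$, $\widetilde{f}_{N}=\psi L_{N}$, while the martingale-difference property uses the tower property of $\widetilde{f}_{n}$ for the odd steps and the multinomial resampling distribution for the even steps, exactly as in the paper. The only cosmetic differences are that you telescope the pairs $\bigl(Z_{2s-1},Z_{2s}\bigr)$ level by level where the paper first groups all odd and all even increments, and that you carry out the conditional-expectation computations for general $s$ where the paper treats $s=1$ and appeals to ``proceeding in this way''.
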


\begin{rem}
It follows from the proposition that 
\begin{eqnarray}
\E{ \widetilde{\psi}_{OR} } & = & \psi_{N} \nonumber \\
\Rightarrow \E{ \frac{ Z_{N}^{\nparticles} }{ Z_{N} } \widehat{\psi}_{OR} } & = & \psi_{N} \nonumber \\
\Rightarrow \E{  Z_{N}^{\nparticles} \widehat{\psi}_{OR} } & = & Z_{N} \psi_{N}. \nonumber
\end{eqnarray}
Taking $\psi \equiv 1$ yields the desired result.
\end{rem}

\begin{proof}[Proof of Proposition \ref{prop:unbiased}]
We observe that 
\begin{eqnarray}
Z_{1}^{(\particleindex)} + \cdots + Z_{2r+1}^{(\particleindex)} & = & \sum_{s=1}^{r} Z_{2s}^{(\particleindex)} + \sum_{s=1}^{r+1} Z_{2s-1}^{(\particleindex)} \nonumber \\
& = & \sum_{s=1}^{r} \left [ \widetilde{f}_{\tau_{s}} \left ( \overline{X}_{1:\tau_{s}}^{(\particleindex)} \right ) H_{\tau_{s}}^{(\particleindex)} - \sum_{j=1}^{\nparticles} V_{\tau_{s}}^{(j)} \widetilde{f}_{\tau_{s}} \left ( X_{1:\tau_{s}}^{(j)} \right ) \widetilde{H}_{\tau_{s}}^{(j)} \right ] \nonumber \\ 
& & + \hspace{0.1in} \sum_{s=1}^{r+1} \left [ \widetilde{f}_{\tau_{s}} \left ( X_{1:\tau_{s}}^{(\particleindex)} \right ) - \widetilde{f}_{\tau_{s-1}} \left ( \overline{X}_{1:\tau_{s-1}}^{(\particleindex)} \right ) \right ] H_{\tau_{s-1}}^{(\particleindex)} \nonumber \\
& = & - \widetilde{f}_{\tau_{0}} \left ( \overline{X}_{\tau_{0}}^{(\particleindex)} \right ) H_{\tau_{0}}^{(\particleindex)} + \sum_{s=1}^{r+1}  \widetilde{f}_{\tau_{s}} \left ( X_{1:\tau_{s}}^{(\particleindex)} \right ) H_{\tau_{s-1}}^{(\particleindex)} - \sum_{s=1}^{r} \sum_{j=1}^{\nparticles} V_{\tau_{s}}^{(j)} \widetilde{f}_{\tau_{s}} \left ( X_{1:\tau_{s}}^{(j)} \right ) \widetilde{H}_{\tau_{s}}^{(j)}  \nonumber
\end{eqnarray}
Thus,
$$
\sum_{\particleindex=1}^{\nparticles} \left ( Z_{1}^{(\particleindex)} + \cdots + Z_{2r+1}^{(\particleindex)} \right ) 
$$
$$
=  - \nparticles \widetilde{f}_{\tau_{0}} \left ( \overline{X}_{\tau_{0}}^{(\particleindex)} \right ) H_{\tau_{0}}^{(\particleindex)} + \sum_{\particleindex=1}^{\nparticles} \sum_{s=1}^{r+1}  \widetilde{f}_{\tau_{s}} \left ( X_{1:\tau_{s}}^{(\particleindex)} \right ) H_{\tau_{s-1}}^{(\particleindex)} - \sum_{\particleindex=1}^{\nparticles} \sum_{s=1}^{r} \sum_{j=1}^{\nparticles} V_{\tau_{s}}^{(j)} \widetilde{H}_{\tau_{s}}^{(j)}  \widetilde{f}_{\tau_{s}} \left ( X_{1:\tau_{s}}^{(j)} \right )  
$$
$$
 =  - M \widetilde{f}_{\tau_{0}} \left ( \overline{X}_{\tau_{0}}^{(\particleindex)} \right ) H_{\tau_{0}}^{(\particleindex)} + \sum_{s=1}^{r+1} \sum_{\particleindex=1}^{\nparticles} \widetilde{f}_{\tau_{s}} \left ( X_{1:\tau_{s}}^{(\particleindex)} \right ) H_{\tau_{s-1}}^{(\particleindex)} - \sum_{\particleindex=1}^{\nparticles} \sum_{s=1}^{r} \sum_{j=1}^{\nparticles} \frac{1}{\nparticles} H_{\tau_{s}-1}^{(j)} \widetilde{f}_{\tau_{s}} \left ( X_{1:\tau_{s}}^{(j)} \right ) 
$$
$$
 =  \sum_{\particleindex=1}^{\nparticles} \widetilde{f}_{\tau_{r+1}} \left ( \overline{X}^{(\particleindex)}_{\tau_{r+1}} \right ) H_{\tau_{r}}^{(\particleindex)} - \nparticles \widetilde{f}_{\tau_{0}} \left ( \overline{X}_{\tau_{0}}^{(\particleindex)} \right ) H_{\tau_{0}}^{(\particleindex)}
$$
$$
=  \nparticles \left ( \widetilde{\psi}_{OR} - \psi_{N} \right ) 
$$

The second inequality is from \ref{eq:3.new}, and the last equality is because $\tau_{0} = 0$ and
\begin{eqnarray}
\widetilde{f}_{\tau_{r+1}} \left ( x_{1:\tau_{r+1}} \right ) & = & \EE_{q} \sqBK{ \psi \left ( \overline{X}_{1:N} \right ) L_{N} \left ( \overline{X}_{1:N} \right ) | \overline{X}_{1:\tau_{r+1}} = x_{1:\tau_{r+1}} } = \psi \left ( x_{1:\tau_{r+1}} \right ) L_{N} \left ( x_{1:\tau_{r+1}} \right )  \nonumber \\
\Rightarrow \widetilde{f}_{\tau_{r+1}} \left ( X_{1:\tau_{r+1}}^{(\particleindex)} \right ) H_{\tau_{r}}^{(\particleindex)} & = & L_{N} \left ( X_{1:N}^{(\particleindex)} \right ) \psi \left ( X_{1:N}^{(\particleindex)} \right ) H^{(\particleindex)}_{\widetilde{\tau}(N)} \hspace{0.2in} \textrm{as } X_{1:\tau_{r+1}}^{(\particleindex)} = X_{1:N}^{(\particleindex)} \textrm{ and } \widetilde{\tau}(N) = \tau_{r}. \nonumber
\end{eqnarray}

Let $\EE_{\nparticles}$ denote expectation under the $\nparticles$ particle system. To prove the martingale difference property, we observe that

\begin{eqnarray}
\CCE{ \nparticles }{ Z_{2}^{(\particleindex)} }{ \FF_{1} } 
& = &
 \CCE{\nparticles}{ \widetilde{f}_{\tau_{1}} \left ( \overline{X}_{1:\tau_{1}}^{(\particleindex)} \right ) H_{\tau_{1}}^{(\particleindex)} - \sum_{j=1}^{\nparticles} V_{\tau_{1}}^{(j)} \widetilde{f}_{\tau_{1}} \left ( X_{1:\tau_{1}}^{(j)} \right ) \widetilde{H}_{\tau_{1}}^{(j)} }{ \FF_{1} } \nonumber \\
& = & \CCE{\nparticles}{  \widetilde{f}_{\tau_{1}} \left ( \overline{X}_{1:\tau_{1}}^{(\particleindex)} \right ) H_{\tau_{1}}^{(\particleindex)} - \frac{1}{\nparticles} \sum_{j=1}^{\nparticles} \widetilde{f}_{\tau_{1}} \left ( X_{\tau_{1}}^{(j)} \right ) H_{\tau_{0}}^{(j)} }{ \FF_{1} } \hspace{0.3in} \textrm{from } \eqref{eq:3.new} \nonumber \\
& = & 
\CCE{ \nparticles }{ \widetilde{f}_{\tau_{1}} \left ( \overline{X}_{1:\tau_{1}}^{(\particleindex)} \right ) H_{\tau_{1}}^{(\particleindex)} - \frac{1}{\nparticles} \sum_{j=1}^{\nparticles} \widetilde{f}_{\tau_{1}} \left ( X_{\tau_{1}}^{(j)} \right ) }{ \FF_{1} } 
\hspace{0.57in} \textrm{as } H_{0}^{(j)} =1 \nonumber \\
& = & 0. \nonumber
\end{eqnarray}
The last equality is because the conditional distribution of $ \left ( \overline{X}_{1:\tau_{s}}^{(1)}, \ldots, \overline{X}_{1:\tau_{s}}^{(\nparticles)} \right )$ given $\FF_{2s-1}$ is that of $\nparticles$ i.i.d. random vectors which take the value $X_{1:\tau_{s}}^{(j)}$ with probability $V_{\tau_{s}}^{(j)}$. 
Also,
\begin{eqnarray}
\CCE{\nparticles}{ Z_{3}^{(\particleindex)} }{ \FF_{2} } & = & \EE \left [ \left \{ \widetilde{f}_{\tau_{2}} \left ( X_{1:\tau_{2}}^{(\particleindex)} \right ) - \widetilde{f}_{\tau_{1}} \left ( \overline{X}_{1:\tau_{1}}^{(\particleindex)} \right ) \right \} H_{\tau_{1}}^{(\particleindex)} \bigg| { \FF_{2} } \right ] \nonumber \\
& = & \EE \left [ \left \{ \widetilde{f}_{\tau_{2}} \left ( X_{1:\tau_{2}}^{(\particleindex)} \right ) - \widetilde{f}_{\tau_{1}} \left ( \overline{X}_{1:\tau_{1}}^{(\particleindex)} \right ) \right \} \bigg| { \FF_{2} } \right ] H_{\tau_{1}}^{(\particleindex)} \nonumber \\
& = & 0. \nonumber
\end{eqnarray}
The last equality is because
\begin{eqnarray}
\CCE{\nparticles}{ \widetilde{f}_{\tau_{s}} \left ( X_{1:\tau_{s}}^{(\particleindex)} \right ) }{ \FF_{2(s-1)} } & = & 
\EE_{\nparticles} \left [ \EE_{q} \left ( \psi ( X_{1:N} ) L_{N}(X_{1:N} ) \bigg| X_{1:\tau_{s}} = X_{1:\tau_{s}}^{(\particleindex)} \right ) \bigg| \FF_{2(s-1)} \right ] \nonumber \\
& = & \EE_{q} \left ( \psi \left ( X_{1:N} \right) L_{N} \left ( X_{1:N} \right ) \bigg| X_{1:\tau_{s-1}} = X_{1:\tau_{s-1}}^{(\particleindex)} \right ) \nonumber \\
& = & \widetilde{f}_{\tau_{s-1}} \left ( \overline{X}_{1:\tau_{s-1}}^{(\particleindex)} \right ). \nonumber 
\end{eqnarray}
The last equality is by the tower property of conditional expectations. Proceeding in this way, it is seen that $ \left \{ \left ( Z_{k}^{(1)}, \ldots, Z_{k}^{(\nparticles)} \right ), \FF_{k} : 1 \leq k \leq 2r + 1 \right \} $ is a martingale difference sequence.
\end{proof}

\end{document}